\documentclass[a4paper,USenglish,cleveref,autoref,thm-restate]{lipics-v2021}

\pdfoutput=1
\hideLIPIcs

\title{Online Stochastic Matchings:\texorpdfstring{\\}{ }Stability on Hypergraphs}
\titlerunning{Online stochastic matchings: stability on hypergraphs}

\author{Fabien Mathieu}{Swapcard, Paris, France \and Sorbonne Universit\'e, CNRS, LIP6, Paris, France \and \url{https://balouf.github.io/}}{fabien@swapcard.com}{https://orcid.org/0000-0003-1362-0359}{}

\authorrunning{F. Mathieu}
\Copyright{Fabien Mathieu}

\ccsdesc[500]{Mathematics of computing~Queueing theory}
\ccsdesc[500]{Mathematics of computing~Markov processes}
\ccsdesc[500]{Mathematics of computing~Matchings and factors}

\keywords{stochastic dynamic matching, hypergraphs, conservation cone, maximally stable policy, matching rates}

\acknowledgements{The author thanks C\'eline Comte, Sushil Mahavir Varma, and Ana Bu\v{s}i\'c
for the collaboration on the polytope perspective~\cite{CMV25} from which this work grew.}

\nolinenumbers
\allowdisplaybreaks

\newcommand{\V}{V}
\newcommand{\Ed}{E}
\newcommand{\N}{\mathbb{N}}
\newcommand{\Z}{\mathbb{Z}}
\newcommand{\R}{\mathbb{R}}
\newcommand{\E}{\mathbb{E}}
\newcommand{\PP}{\mathbb{P}}
\newcommand{\cone}{\operatorname{cone}}
\newcommand{\interior}{\operatorname{int}}
\newcommand{\supp}{\operatorname{supp}}
\newcommand{\ml}{\textsf{ML}}
\newcommand{\vqml}{\textsf{VQML}}
\newcommand{\fcfs}{\textsf{FCFS}}
\newcommand{\emptystate}{\varnothing}

\usepackage{tikz}
\usetikzlibrary{backgrounds}
\usepackage{pgfplots}
\pgfplotsset{compat=1.16}
\usepackage{subcaption}

\makeatletter
\newcounter{lem@own}\newcounter{prop@own}\newcounter{cor@own}\newcounter{rem@own}
\@addtoreset{theorem}{section}
\let\c@lemma\c@lem@own
\@addtoreset{lemma}{section}
\let\c@proposition\c@prop@own
\@addtoreset{proposition}{section}
\let\c@corollary\c@cor@own
\@addtoreset{corollary}{section}
\let\c@remark\c@rem@own
\@addtoreset{remark}{section}

\makeatother

\begin{document}

\maketitle

\begin{abstract}
We study stochastic dynamic matching on hypergraphs: items of finitely many classes
arrive over time and are removed in multisets by activating hyperedges. We characterize
\emph{stabilizability}, the existence of a matching policy under which the queue
process is positive recurrent, in terms of the arrival rates and the incidence matrix
alone: $(G,\lambda)$ is stabilizable if and only if the conservation equation
$A\mu = \lambda$ admits a nonnegative solution whose support induces a surjective
submatrix, equivalently $\lambda$ lies in the interior of the cone generated by the
hyperedges. This extends a characterization known for simple graphs (non-bipartiteness together with the
independent-set inequalities) to arbitrary hyperedges, allowing multiplicities and
mono-edges (single-class edges), and, unlike the constant-regret theory, needs no general-position assumption.
Sufficiency is constructive: a
single $\lambda$-oblivious policy, Virtual-Queue Match-the-Longest (\vqml{}), a rewardless
variant of the Extended Greedy Primal--Dual policy of Nazari and Stolyar, stabilizes every
stabilizable instance and is therefore maximally stable. The sufficiency proof requires
the positive recurrence of the signed virtual queue underlying \vqml{}; previous analyses
invoke this property but, to our knowledge, do not prove it, and supplying it is a second
contribution.
\end{abstract}

\section{Introduction}\label{sec:intro}

In an online stochastic matching system, items of finitely many \emph{classes} arrive at
random over time, wait in per-class queues, and are removed in compatible groups: a fixed
family of (hyper)edges specifies which multisets of classes may be matched and cleared
together. Such models describe kidney-exchange chains, assemble-to-order and
manufacturing, ride-sharing, and multi-way resource pooling; three-way matching arises
naturally in quantum switches, where an entanglement request consumes one qubit from each
of several nodes. The first question about any such system, before optimizing rewards or
delay, is \emph{stabilizability}: does there exist a matching policy under which the queues
do not blow up?

For matching on \emph{simple graphs} (every edge has size two) this question is fully answered.
Mairesse and Moyal~\cite{MM16} showed that a connected graph is stabilizable for some
arrival rate exactly when it is non-bipartite, and that match-the-longest is then maximally
stable; the precise stability region is cut out by independent-set (Hall-type)
inequalities~\cite{MM16,C22,BMM21}. The polytope perspective of~\cite{CMV25} recasts this through the
\emph{conservation polytope}: writing $A$ for the class-edge incidence matrix and $\lambda$
for the arrival-rate vector, $(G,\lambda)$ is stabilizable if and only if $\lambda$ lies in
the interior of the cone generated by the edges, equivalently the conservation equation
$A\mu=\lambda$ admits a nonnegative solution whose support induces a surjective submatrix.
This linear-algebraic characterization~\cite[Proposition~8, \S3.3.2]{CMV25} is the starting
point of the present paper.

\paragraph*{Contribution.}
We extend that characterization verbatim to \emph{hypergraphs} (edges of any size,
with multiplicities and mono-edges allowed). \Cref{thm:main} shows that $(G,\lambda)$ is
stabilizable if and only if the same support-surjective conservation condition holds,
equivalently $\lambda\in\interior\cone(A)$, equivalently there is no boundary certificate:
no $y\neq0$ with $\langle y, A_k\rangle\geq0$ for all $k$ and $\langle y,\lambda\rangle\leq0$. In the
simple graph case the condition reduces to non-bipartiteness together with the independent-set
inequalities, so the theorem is a strict generalization. The original paper
outlines this hypergraph extension but leaves its formal treatment out of scope~\cite[\S8.3]{CMV25}. Sufficiency
is constructive and $\lambda$-oblivious: a single policy defined from $G$ alone,
Virtual-Queue Match-the-Longest (\vqml{}), introduced in~\cite{CMV25} as the rewardless
variant of the Extended Greedy Primal--Dual policy of Nazari and Stolyar~\cite{NS19},
stabilizes every stabilizable instance and is therefore maximally stable. Proving this requires the positive recurrence of the signed virtual
queue underlying \vqml{}, which Nazari and Stolyar invoke but, to our knowledge, do not prove for
the signed chain (see \cref{sec:related}); supplying it (a uniform inward drift from the
interior hypothesis, together with a reachability argument for the origin) is a second
contribution. The characterization was also obtained independently and
concurrently, for hyperedges taken as vertex sets, by Nguyen and
Bu\v{s}i\'c~\cite{NB26}; see the note closing \cref{sec:related}.

\paragraph*{Why the simple graph proof does not transfer.}
On simple graphs, sufficiency is obtained by a greedy policy (match-the-longest), through a
coupling with the bipartite double cover~\cite{MM16}. That route is unavailable on
hypergraphs, and the obstruction is structural rather than technical: greedy (non-idling)
policies are in general \emph{not} maximally stable on hypergraphs, so a maximally stable
policy must sometimes idle.
Note that even on simple graphs, some greedy policies can be unstable~\cite{MP17}; what is special about simple graphs is that \emph{some} greedy policies, like
match-the-longest, are maximally stable~\cite{MM16}, so idling is never needed there.
On hypergraphs, no greedy policy can fit the requirement. The phenomenon is documented in several forms: for
three-way matching in quantum switches, MaxWeight with idling outperforms every
non-idling policy (in a throughput sense, under abandonment)~\cite{ZJM22}, and greedy
matching is hindsight-optimal only in two-way networks~\cite{KAG23,KAG24}. This is why \vqml{} carries a
signed \emph{virtual} queue and
a reservation mechanism rather than greedy matching on arrival. \Cref{sec:examples} gives a minimal, fully
controlled illustration.

The rest of the paper is organized as follows: \Cref{sec:related} places the result in the literature; \cref{sec:model} fixes the model
and states \cref{thm:main}. \Cref{sec:examples} illustrates it on a handy toy example, the \emph{candy}. \Cref{sec:proof} proves the theorem, with the longer verifications of the sufficiency step deferred to the appendices.

\section{Related work}\label{sec:related}

\paragraph*{Stability of graph matching.}
The general (non-bipartite) stochastic matching model was introduced by Mairesse and
Moyal~\cite{MM16}, who characterized stabilizability by non-bipartiteness and proved
match-the-longest maximally stable; Comte~\cite{C22} and Begeot et al.~\cite{BMM21} gave
the stability region and product-form and ubiquitous-measure descriptions, and
\cite{MBM21} a product form for the general model. The bipartite case, non-stabilizable on
its own but central to the First-Come-First-Served (\fcfs{}) matching-rate theory, goes back to
Caldentey--Kaplan--Weiss and Bušić--Gupta--Mairesse~\cite{CKW09,BGM13}. The
polytope/conservation viewpoint we build on is that of~\cite{CMV25}.

\paragraph*{Multi-way matching and hypergraphs.}
Rahme and Moyal~\cite{RM21} initiated the stability study of matching on hypergraphs. They
give necessary conditions in terms of transversals, rank and anti-rank, and exhibit broad
families of non-stabilizable geometries (e.g.\ a hyperedge with two degree-one classes, or
$r$-uniform ``bipartite'' hypergraphs), together with the exact region for complete
$3$-uniform hypergraphs. Two features leave the general question open, and motivate the
present work: their policies are \emph{greedy} (matching is mandatory on arrival), so the
possibility that idling enlarges the stabilizable set is outside their framework; and no
single policy is shown or conjectured maximally stable, the authors noting that progress
seems ``likely to be obtained only on a case-by-case basis''~\cite[\S7]{RM21}. Related threads include
matching on multigraphs~\cite{BMMR20}, generalized max-weight policies under mandatory
matching~\cite{JMRS20}, and the open-problem survey of Mairesse and Moyal~\cite{MM22}. Three-way matching in quantum switches~\cite{VT22,ZJM22} is a concrete hypergraph
application where idling strictly enlarges the achievable region (in a throughput sense,
under abandonment)~\cite{ZJM22}.

\paragraph*{Control, greedy policies, and regret.}
A parallel line studies reward maximization and regret rather than bare stability. Nazari
and Stolyar~\cite{NS19} introduced the Extended Greedy Primal--Dual (EGPD) policy, whose
rewardless variant is our \vqml{}. Kerimov, Ashlagi and
Gurvich~\cite{KAG23,KAG24} characterize and achieve constant regret under a
general-position condition (i.e.\ the static planning LP has a unique, non-degenerate optimum) and show greedy policies are hindsight-optimal in two-way
networks but not beyond; Gupta~\cite{G24} gives a greedy multi-way policy with bounded
regret from the optimal basis, and Wei, Xu and Yu~\cite{WXY23} a constant-regret
primal-dual policy for multi-way matching. Gurvich and Ward~\cite{GW14} study the dynamic
control of matching queues. These policies are either two-way, or $\lambda$-dependent, or
tuned to a reward objective; none is a single $\lambda$-oblivious maximally stable policy,
which is what the characterization requires.

\paragraph*{What \texorpdfstring{\cite{NS19}}{[NS19]} provides, and what remains to prove.}
EGPD supplies exactly the mechanism our sufficiency proof needs: a signed virtual queue
driven by MaxWeight~\cite{TE92}, decoupled from the physical feasibility constraints. In both its
original and rewardless forms it is $\lambda$-oblivious. But it does not settle the question addressed
here on two counts. First, Nazari and Stolyar~\cite{NS19} target reward maximization; they do not characterize
stabilizability, and in particular do not identify $\lambda\in\interior\cone(A)$ (in support
form) as the exact frontier. They do come close informally: in~\cite[\S3.6.2]{NS19} they remark that
their Assumption~5 is essentially stabilizability together with a condition that the queues
``can be moved in any direction'', and our \cref{lem:bridge} makes one direction of this precise,
deriving Assumption~5 from $\lambda\in\interior\cone(A)$. Second, their stability
argument~\cite[\S3.5]{NS19} rests on the positive recurrence of the signed virtual chain,
which is \emph{asserted} with a pointer to~\cite[\S4.9]{stolyar05}, a MaxWeight
positive-recurrence result for \emph{nonnegative} queues (as they themselves
note in~\cite[\S4]{NS19}, that model constrains all queues to be nonnegative); for the signed chain, whose
coordinates may go negative and whose communicating structure is not a priori irreducible,
we could locate no proof. The property is not a formality: \cref{rem:tiebreak} exhibits a
legal tie-breaking rule under which the state $0$ of the signed chain is transient.
We isolate the rewardless core, prove the
missing positive recurrence (a uniform inward drift from the interior hypothesis, together
with a deterministic reachability argument for the origin), and only then use the policy as
the sufficiency witness. So EGPD, as \vqml{}, is the key that unlocks sufficiency, but both
the characterization and the proof that \vqml{} is maximally stable are contributions here.

\paragraph*{Note on concurrent work.}
A first public version of the present paper was deposited on July 21, 2026
(HAL: hal-05697834; arXiv:2607.18935). On July 26, 2026, Nguyen and
Bu\v{s}i\'c posted the independent preprint~\cite{NB26}. They introduce the
family of \emph{online assignment} policies, in which each arriving item is
assigned to a hyperedge type on arrival, prove it maximally stable via a
constructive, arrival-rate-agnostic MaxWeight rule on assignment imbalances,
and characterize stabilizability by three equivalent criteria
(\cite[Theorems~3.1, 4.1 and 5.1]{NB26}), whose incidence-matrix form (their
Condition~(3): a positive solution to the conservation equation with $A$
surjective) matches condition (iii) of \cref{thm:main} for hyperedges taken
as vertex sets. Their proofs rest on an intra-edge imbalance Lyapunov
function over assignment counts, ours on the signed virtual queue of the
rewardless EGPD policy; \cref{thm:main} also covers multiplicities, which a set-based model cannot
encode. Mono-edges, admissible in their model as singleton sets but not
pursued there, play an explicit role here: they model controlled discards
and drive the budget-two mechanism of \cref{rem:budget}. Finally, their
intra-class criterion (\cite[Theorem~3.1]{NB26}) has no analogue in the
present paper.

\section{Model and statement}\label{sec:model}

Let $\V = \{1, \dots, n\}$ be a finite set of \emph{classes} and
$\Ed = \{1, \dots, m\}$ a finite set of \emph{hyperedges}, described by an incidence
matrix $A \in \N^{n \times m}$: activating hyperedge $k$ consumes $A_{i,k}$ items of
class~$i$, for each $i \in \V$. We write $A_k \in \N^n$ for the $k$-th column and assume
$A_k \neq 0$ for all $k$. Mono-edges ($A_k = e_i$, a class that departs on its own) and
repeated entries ($A_{i,k} \geq 2$, several items of one class cleared at once) are allowed.
We write $G = (\V, \Ed)$ for the hypergraph so described, and call a pair $(G, \lambda)$
a \emph{hypergraph matching problem}.
Items of class $i$ arrive according to independent Poisson processes with rates
$\lambda_i > 0$, one item at a time; we write $\Lambda = \sum_i \lambda_i$ for the total rate and
$\bar\lambda = \lambda / \Lambda$ for the normalized rates, so that arrival classes at successive epochs are
i.i.d.\ with law $\bar\lambda$. Unmatched items wait in per-class queues
$X(t) \in \N^n$. A \emph{matching policy} decides, at arrival epochs, which hyperedges
to activate (each activation of $k$ removes the multiset $A_k$ from the queues, and is
allowed only if the queues suffice); policies may use an auxiliary internal state and may
idle, subject to:
(a) \emph{(countable Markov structure)} the internal state ranges over a fixed countable
set, and both the activation decision and the state update occur at arrival epochs only,
each a function of the current state $S(t)$ and the arriving class $a(t)$ alone (possibly
through fresh independent randomness); thus $S(t) = (\text{queue vector},
\text{internal state})$ is a time-homogeneous Markov chain on a countable state space, and
the continuous-time process is its constant-rate-$\Lambda$ uniformization. Randomization is
permitted in the transition kernel, but a continuously distributed quantity may not be
\emph{stored} in the internal state, which would leave the countable setting.
(b) The number of activations per epoch is uniformly bounded.
The system \emph{starts empty}: $X(0) = 0$ and the policy's internal state starts at
its distinguished initial value (matching the convention of~\cite[Appendix~A]{CMV25}, whose
policy model starts at $S_0 = \emptyset$).
The \emph{model} $(G, \lambda, \Phi)$, where $\Phi$ is a matching policy adapted to the problem $(G, \lambda$), is \emph{stable} if the embedded chain is positive
recurrent on the communicating class of the initial (empty) state, equivalently if
the continuous-time chain is; the problem $(G, \lambda)$ is \emph{stabilizable}
if some policy makes it stable. A policy is \emph{maximally stable} if it stabilizes
every stabilizable problem.

The \emph{conservation equation} is $A\mu = \lambda$, $\mu \in \R_{\geq 0}^m$; we write
$\cone(A) = A\,\R^m_{\geq 0}$ for the cone spanned by the columns of $A$, and call $A$
\emph{surjective} when $A \R^m = \R^n$ (trivial left kernel).

\begin{theorem}\label{thm:main}
For a hypergraph matching problem $(G, \lambda)$ with $\lambda \in \R_{>0}^n$, the
following are equivalent:
\begin{enumerate}
\item[(i)] $(G, \lambda)$ is stabilizable;
\item[(ii)] for every $y \in \R^n \setminus \{0\}$ such that $\langle y, A_k \rangle \geq 0$ for
all $k \in \Ed$, we have $\langle y, \lambda \rangle > 0$;
\item[(iii)] the conservation equation admits a solution $\mu$ with $\mu_k > 0$ for
all $k$, and $A$ is surjective; equivalently, $\lambda \in \interior \cone(A)$;
\item[(iv)] \emph{(support form)} the conservation equation admits a solution
$\mu \in \R_{\geq 0}^m$ such that the restriction of $A$ to the columns in
$\supp(\mu)$ is surjective.
\end{enumerate}
\end{theorem}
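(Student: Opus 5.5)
The plan has two parts: prove the purely convex-geometric equivalences (ii)$\Leftrightarrow$(iii)$\Leftrightarrow$(iv) first, then close the loop with necessity (i)$\Rightarrow$(ii) and sufficiency (iv)$\Rightarrow$(i).

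\textbf{Geometric equivalences.} These are standard finite-dimensional convexity.

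For (iii)$\Rightarrow$(iv), take the strictly positive $\mu$. Its support is all of $\Ed$, and $A$ is surjective.

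For (iv)$\Rightarrow$(iii), let $S=\supp(\mu)$ with $A_S$ surjective. Then $A_S\R^S_{>0}$ is an open set: it is the image of an open set under a surjective linear map. It contains $\lambda$ and lies in $\cone(A)$, so $\lambda\in\interior\cone(A)$. Conversely, suppose $\lambda\in\interior\cone(A)$. Then $\cone(A)$ has nonempty interior, so $A$ is surjective. Moreover $\lambda-\varepsilon\sum_k A_k$ still lies in $\cone(A)$ for small $\varepsilon>0$, which gives a representation of $\lambda$ with all $\mu_k\ge\varepsilon$.

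For (ii)$\Leftrightarrow$ interior, use the polar cone $\cone(A)^\ast=\{y:\langle y,A_k\rangle\ge0\ \forall k\}$. A point of a closed convex cone is interior if and only if it pairs strictly positively with every nonzero element of the dual cone. One direction is direct: a small ball around $\lambda$ stays in the cone. For the other direction, if $\lambda$ is on the boundary or outside, a supporting or separating hyperplane from Farkas' lemma produces $y\ne0$ in the dual with $\langle y,\lambda\rangle\le0$. Note that when $A$ is not surjective, any nonzero $y$ in the left kernel lies in the dual cone with $\langle y,A_k\rangle=0$. Either $y$ or $-y$ then has $\langle\cdot,\lambda\rangle\le0$, so this case is covered automatically.

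\textbf{Necessity (i)$\Rightarrow$(ii).} Suppose there is $y\ne0$ with $\langle y,A_k\rangle\ge0$ for all $k$ and $\langle y,\lambda\rangle\le0$. Consider the scalar process $Z(t)=\langle y, N(t)\rangle-\langle y,X(t)\rangle$, where $N(t)$ is the cumulative arrival vector. Then $Z(t)=\sum_k D_k(t)\langle y,A_k\rangle$ is nondecreasing, with $D_k$ the activation counts. Hence
\[
\langle y,X(t)\rangle\le\langle y,N(t)\rangle,
\]
which is a random walk with drift $\langle y,\lambda\rangle\le0$.

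If the drift is $<0$, then $\langle y,X(t)\rangle\to-\infty$. This forces $X$ to be unbounded along the negative-$y$ direction, so $X$ cannot be positive recurrent: a stationary distribution would make $\langle y,X\rangle$ tight.

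If the drift is $=0$, the centered walk has fluctuations of order $\sqrt t$ with $\limsup$ of $-\langle y,N\rangle/\sqrt t=+\infty$. Combined with $Z$ nondecreasing, a stationary regime would need $Z(t)/t\to$ a constant. I would argue through the law of large numbers for the positive recurrent chain: $\E_\pi[\Delta Z]=\langle y,\lambda\rangle=0$ forces $Z$ to be eventually constant, hence $\langle y,X\rangle=\langle y,N\rangle-c$, which has null-recurrent-type fluctuations. That contradicts tightness.

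I expect the zero-drift case to be the delicate point of necessity. Multiplicities require no change here, since everything is linear in $A$.

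\textbf{Sufficiency (iv)$\Rightarrow$(i).} This goes through \vqml{}, following the signed-virtual-queue architecture sketched in \cref{sec:related}.

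\emph{Virtual queue.} Define the signed virtual queue $Q\in\Z^n$. On each arrival it increments the arriving class and, MaxWeight-style, decrements by the hyperedge $A_k$ maximizing $\langle Q,A_k\rangle$ when positive. Assigned virtual activations are recorded in per-edge reservation counters, and physical activation happens once enough items are present.

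\emph{Drift of $Q$.} Using interior-ness, there is $\delta>0$ such that $\lambda-\delta u\in\cone(A)$ for every unit $u$. With $V(Q)=\|Q\|^2$, this gives a uniform negative drift outside a ball, of the form $-\delta\Lambda\|Q\|+O(1)$. This is exactly the point where condition (ii) enters, and it is the content that \cref{lem:bridge} packages.

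\emph{Positive recurrence of the signed chain.} Foster's criterion alone only gives positive recurrence on some closed class. I must also show that the origin is reachable from every state reached from $0$ under the chosen tie-breaking; \cref{rem:tiebreak} warns this can fail. I would fix a deterministic tie-breaking rule and build an explicit finite arrival sequence driving $Q$ back to $0$. Since $\lambda$ is interior, arrivals of any class have positive probability, and the integer combinations of edges reachable via arrivals span the lattice directions needed. I expect this reachability construction to be the main obstacle.

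\emph{Physical queues.} Finally, show that the physical queue and the reservation counters are bounded functions of $Q$ plus a bounded pending buffer, so positive recurrence transfers to the full Markov state $S(t)$.
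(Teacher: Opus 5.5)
Your convex-geometry part is correct and matches \cref{lem:geometry}, with (iv) folded in through openness of $A_{\supp(\mu)}$. Your necessity argument is essentially the paper's CLT-versus-tightness argument. The one step you still need to justify is $\E_\pi[\Delta Z]=\langle y,\bar\lambda\rangle$, because $X$ need not be $\pi$-integrable. The paper gets the exact balance $\bar\lambda=A\bar\mu$ by truncation with bounded increments. That identity also removes your case split, since $\langle y,\bar\lambda\rangle=\sum_k\langle y,A_k\rangle\bar\mu_k\ge0$.

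\textbf{The gap: your drift step in sufficiency fails.} Your virtual queue removes a \emph{single} hyperedge per arrival, and you derive a uniform negative drift from $\bar\lambda-\delta u\in\cone(A)$. Cone membership does not give this. MaxWeight only guarantees $\langle q,As(q)\rangle\ge\langle q,A\varphi\rangle$ for $\varphi$ within the per-epoch activation budget. So you need $\bar\lambda+\delta u=A\varphi$ with $\|\varphi\|_1\le1$, and nothing controls $\|\varphi\|_1$.

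It fails outright when every hyperedge is a mono-edge. Take $A=I_2$ and $\lambda=(1,1)$, for which (iii) holds. At $q=(k,k)$ the expected increment of $\|Q\|_2^2$ is $+1$ for every $k$. Each epoch brings one unit and removes at most one, so $\sum_iQ_i(t)$ never decreases. Under your rule (activate only when the best score is positive), the chain moves from $0$ to some $e_i$ and stays in $\{e_1,e_2\}$ forever. The empty state is therefore transient, and the system is not stable in the sense of \cref{sec:model}; this is exactly \cref{rem:budget}. Even where the drift survives, e.g.\ columns $e_1,e_2,e_1+e_2$, a legal single-edge rule that prefers mono-edges at ties traps the chain in $\{e_1,e_2\}$ the same way. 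So reachability of $0$ fails too.

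\textbf{The repair is the paper's budget of two.} Activate a multiset $s$ with $\|s\|_1\le2$ by signed MaxWeight. Since $\|A_k\|_1\ge1$, $\bar\mu=\mu/\Lambda$ has $\|\bar\mu\|_1\le1$, so it lies in the interior of $\{\varphi\ge0,\ \|\varphi\|_1\le2\}$. Openness of the surjective $A$ then gives $\varphi_\sigma$ in that set with $A\varphi_\sigma=\bar\lambda+\varepsilon\sigma$ for all sign vectors $\sigma$ (\cref{lem:bridge}). This yields the drift bound $R^2-2\varepsilon\|q\|_1$.

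\textbf{The same budget drives the reachability you leave open.} The mechanism is not lattice spanning: \cref{rem:tiebreak} exhibits a legal budget-two rule under which $0$ is transient. Two ingredients make it work.
\begin{enumerate}
\item The idle clause (idle whenever $\max_k\langle q,A_k\rangle\le0$) lets you feed a negative coordinate without triggering any activation.
\item At an active state every maximizer has $\|s\|_1=2$, hence $\|As\|_1\ge2$, so one well-chosen arrival lowers the positive mass $V(q)$ by at least one.
\end{enumerate}
Together these give an explicit arrival word of length at most $\|q\|_1+2a_{\max}V(q)$ leading to $0$ (\cref{lem:access}).

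\textbf{The transfer needs more than bounds in terms of $Q$.} Use an eager completion rule (e.g.\ FIFO). Then the unassigned items equal $Q^+$ and the outstanding residual demand equals $Q^-$ pathwise, hence $Q(t)=0\iff S(t)=\emptystate$ (\cref{cor:regen}). Returns of $Q$ to $0$ are then returns of the full chain to the empty state. A pending buffer that can be nonempty when $Q=0$ would break this and require a separate accessibility argument.
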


Condition (iv) is a convenient form for checking stabilizability: it makes no global assumption on
$A$, asking only for one nonnegative conservation solution whose support is surjective,
rather than positivity on every hyperedge. In the simple graph case it reads: \emph{some
nonnegative solution of $A\mu=\lambda$ has a support whose each connected component is non-bipartite.} A single policy, \vqml{}
(see \cref{sec:sufficiency}), defined from $G$ alone with no knowledge of $\lambda$ and no
parameter, stabilizes $(G,\lambda)$ whenever these conditions hold; it is therefore
maximally stable.

\begin{remark}[Scope of the policy class]\label{rem:policy-class}
	The class of policies we consider here \emph{strictly extends} the policy model of \cite[Appendix~A]{CMV25}: it allows several activations per epoch, matchings among already-queued items (not involving the arriving item), and drops the assumptions of a unique empty state and of irreducibility (the latter must then be \emph{proven} for a given policy, not assumed). In particular, the necessity part of \Cref{thm:main} (\cref{prop:necessity}) is proven over this wider class, so the equivalence closes at the wider class and a fortiori constrains those policies. Note that the
	assumption of a uniform bound in activations per epoch can be relaxed to a first-moment condition on the per-epoch activation count, with the bounded-increment step of \cref{prop:necessity} then justified by dominated
	convergence in place of a uniform bound. We keep the uniform bound for simplicity.
\end{remark}

\begin{remark}[Simple graph case]
When every column has two unit entries (simple graphs), (ii) recovers the classical
independent-set condition for graphs~\cite{MM16}: for an independent set $\mathcal{I}$ with neighborhood
$\Gamma(\mathcal I)$, the vector $y = \mathbf{1}_{\Gamma(\mathcal I)} - \mathbf{1}_{\mathcal I}$
satisfies $y^\top A_k \geq 0$ for every edge $k$ (edges inside $\mathcal I$ do not
exist; edges from $\mathcal I$ go to $\Gamma(\mathcal I)$ and contribute $0$; all other
edges contribute $0$, $1$ or $2$), and $\langle y, \lambda\rangle > 0$ reads
$\lambda(\Gamma(\mathcal I)) > \lambda(\mathcal I)$. Bipartiteness of a component yields
$y = \pm(\mathbf{1}_{V^+} - \mathbf{1}_{V^-})$ with $y^\top A = 0$, so (ii) fails for
one of the two signs: condition (ii) subsumes both non-bipartiteness and the
independent-set inequalities. The converse, that these classical conditions (for connected
graphs) imply (ii), is the simple graph characterization of~\cite{MM16}.
\end{remark}

\begin{remark}[A degenerate example covered by \cref{thm:main} but not by
general-position conditions]\label{rem:wall}
Let $n = 3$ and let 
\[
A = \begin{pmatrix}
	0 & 1 & 1 & 0\\ 
	0 & 0 & 1 & 1\\ 
	1 & 1 & 1 & 1\\ 
	\end{pmatrix}\text{, with }\lambda = (1,1,2)\text{.}
\]
Then $A$ is surjective and the positive solutions of $A\mu =
\lambda$ form the segment $\mu(t) = (t,\, 1-t,\, t,\, 1-t)$, $t \in (0,1)$, yet
\emph{every} basic feasible solution is degenerate (supports $\{1,3\}$ or $\{2,4\}$,
each of rank~$2$). The general-position conditions on which constant-regret analyses
rely~\cite{KAG24} therefore fail, yet \cref{thm:main} shows $(G,\lambda)$ is stabilizable,
and by a $\lambda$-oblivious policy.
\end{remark}

\section{Case study: the \emph{candy} hypergraph}\label{sec:examples}

Before proving \cref{thm:main}, we illustrate it on a small but instructive hypergraph, which we call the \emph{candy}.
The candy makes concrete the greedy--idling gap exposed in \cref{sec:intro}: a \emph{greedy} (non-idling) policy can fail to stabilize a stabilizable instance, so idling is necessary.
This is specific to hypergraphs: on simple graphs match-the-longest is maximally stable~\cite{MM16}, so idling is never needed there, whereas on hypergraphs it can be necessary, as in quantum switches, where idling strictly enlarges the throughput region under abandonment~\cite{ZJM22}.

Instances of this phenomenon are known, the quantum switch above among
them~\cite{ZJM22}; the candy is, to our knowledge, the simplest fully
controlled one: a purely combinatorial model with a closed-form stability region and an
instability threshold provable for \emph{every} greedy policy. It separates the combinatorial question answered by \cref{thm:main} from the algorithmic behavior of greedy matching.

\paragraph*{The candy.}

The \emph{candy}, shown in \cref{fig:candy}, has $n=7$ classes and $m=7$ hyperedges: two simple triangles on $\{1,2,3\}$ and $\{5,6,7\}$, joined by a single central hyperedge $\{3,4,5\}$, whose private class $4$ has degree one (it belongs to no other edge).

With edge order
$\{1,2\},\{1,3\},\{2,3\},\{5,6\},\{5,7\},\{6,7\},\{3,4,5\}$, the incidence matrix is
\[
A=\begin{pmatrix}
1&1&0&0&0&0&0\\
1&0&1&0&0&0&0\\
0&1&1&0&0&0&1\\
0&0&0&0&0&0&1\\
0&0&0&1&1&0&1\\
0&0&0&1&0&1&0\\
0&0&0&0&1&1&0
\end{pmatrix}.
\]

\begin{figure}[t]
\centering
\begin{tikzpicture}[
  cls/.style={circle,draw,fill=white,inner sep=1pt,minimum size=6mm,font=\small},
  thick]
\begin{scope}[on background layer]
  \fill[black!25] (0,0) ellipse (2.4cm and 0.8cm);
  \draw[black!75,dashed] (0,0) ellipse (2.4cm and 0.8cm);
\end{scope}
\node[cls] (n1) at (-3.3, 0.9) {$1$};
\node[cls] (n2) at (-3.3,-0.9) {$2$};
\node[cls] (n3) at (-1.6, 0)   {$3$};
\node[cls] (n4) at ( 0,   0)   {$4$};
\node[cls] (n5) at ( 1.6, 0)   {$5$};
\node[cls] (n6) at ( 3.3, 0.9) {$6$};
\node[cls] (n7) at ( 3.3,-0.9) {$7$};
\draw (n1)--(n2) (n1)--(n3) (n2)--(n3);
\draw (n5)--(n6) (n5)--(n7) (n6)--(n7);
\node[font=\footnotesize] at (0,-1.2) {hyperedge $\{3,4,5\}$};
\end{tikzpicture}
\caption{The candy: two simple triangles, $\{1,2,3\}$ and $\{5,6,7\}$, joined by the hyperedge
$\{3,4,5\}$ (shaded), whose private class~$4$ has degree one. Simple edges are drawn as
segments, the hyperedge as a region enclosing its three classes.}
\label{fig:candy}
\end{figure}

It is square and invertible, so $A\mu=\lambda$ has a unique solution $\mu(\lambda)$ for
every $\lambda$, and by \cref{thm:main} (surjectivity holds since $\det A\neq 0$)
$(G,\lambda)$ is stabilizable iff that solution is strictly positive. Solving the two
triangles against the central edge gives the following closed form.

\begin{proposition}[Stability region of the candy]\label{prop:candy-region}
The candy is stabilizable if and only if
\[
|\lambda_1-\lambda_2| \;<\; \lambda_3-\lambda_4 \;<\; \lambda_1+\lambda_2
\qquad\text{and}\qquad
|\lambda_7-\lambda_6| \;<\; \lambda_5-\lambda_4 \;<\; \lambda_6+\lambda_7 .
\]
\end{proposition}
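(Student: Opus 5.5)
The plan is to apply \cref{thm:main} in form (iii). As noted just before the statement, $A$ is square with $\det A\neq0$, so $A$ is surjective. The conservation equation $A\mu=\lambda$ therefore has exactly one solution $\mu(\lambda)$, and $(G,\lambda)$ is stabilizable if and only if every coordinate of $\mu(\lambda)$ is strictly positive. It then suffices to solve the system in closed form and read off the seven positivity conditions.

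I index $\mu_1,\dots,\mu_7$ following the edge order $\{1,2\},\{1,3\},\{2,3\},\{5,6\},\{5,7\},\{6,7\},\{3,4,5\}$. Class~$4$ has degree one, so row~$4$ forces $\mu_7=\lambda_4$, which is automatically positive. Substituting this into rows $3$ and $5$ splits the system into two independent triangle systems. For the left triangle, set $c=\lambda_3-\lambda_4$; the equations are $\mu_1+\mu_2=\lambda_1$, $\mu_1+\mu_3=\lambda_2$ and $\mu_2+\mu_3=c$. For the right triangle, set $d=\lambda_5-\lambda_4$; the equations are $\mu_4+\mu_5=d$, $\mu_4+\mu_6=\lambda_6$ and $\mu_5+\mu_6=\lambda_7$.

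Each triangle is solved by the usual half-sum formulas:
\[
\mu_1=\tfrac{\lambda_1+\lambda_2-c}{2},\quad \mu_2=\tfrac{\lambda_1-\lambda_2+c}{2},\quad \mu_3=\tfrac{\lambda_2-\lambda_1+c}{2},
\]
\[
\mu_4=\tfrac{d+\lambda_6-\lambda_7}{2},\quad \mu_5=\tfrac{d+\lambda_7-\lambda_6}{2},\quad \mu_6=\tfrac{\lambda_6+\lambda_7-d}{2}.
\]
I will check these by direct substitution. The positivity conditions then pair up. From $\mu_2,\mu_3>0$ we get $c>\lambda_2-\lambda_1$ and $c>\lambda_1-\lambda_2$, that is, $c>|\lambda_1-\lambda_2|$. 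From $\mu_1>0$ we get $c<\lambda_1+\lambda_2$. In the same way, $\mu_4,\mu_5>0$ give $d>|\lambda_7-\lambda_6|$, and $\mu_6>0$ gives $d<\lambda_6+\lambda_7$. Substituting back $c=\lambda_3-\lambda_4$ and $d=\lambda_5-\lambda_4$ yields exactly the two chains of inequalities in the statement.

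There is no real obstacle here. The only points needing care are to remember that $\mu_7=\lambda_4>0$ adds no constraint, and to note that the equivalence is sharp in both directions: uniqueness of the solution means positivity of this particular $\mu$ is necessary as well as sufficient, and this is precisely condition (iii).
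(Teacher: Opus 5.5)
Your proposal is correct and follows the paper's proof essentially step for step. You fix $\mu_{\{3,4,5\}}=\lambda_4$ from the degree-one class~$4$, split the rest into two triangle systems with right-hand sides $\lambda_3-\lambda_4$ and $\lambda_5-\lambda_4$, solve them by half-sums, and read off stabilizability as positivity of the unique solution via \cref{thm:main}(iii); your explicit formulas and your pairing of the positivity conditions are correct.
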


\begin{proof}
Class $4$ has the hyperedge as its only edge, so $\mu_{\{3,4,5\}}=\lambda_4$. On the
first triangle the remaining balance equations are
$\mu_{\{1,2\}}+\mu_{\{1,3\}}=\lambda_1$, $\mu_{\{1,2\}}+\mu_{\{2,3\}}=\lambda_2$ and
$\mu_{\{1,3\}}+\mu_{\{2,3\}}=\lambda_3-\lambda_4$, whose unique solution is
$\mu_{\{1,2\}}=\tfrac12(\lambda_1+\lambda_2-(\lambda_3-\lambda_4))$,
$\mu_{\{1,3\}}=\tfrac12(\lambda_1-\lambda_2+(\lambda_3-\lambda_4))$,
$\mu_{\{2,3\}}=\tfrac12(\lambda_2-\lambda_1+(\lambda_3-\lambda_4))$. These are all
positive iff $|\lambda_1-\lambda_2|<\lambda_3-\lambda_4<\lambda_1+\lambda_2$; the second
triangle is symmetric. By \cref{thm:main} positivity of $\mu(\lambda)$ is equivalent to
stabilizability.
\end{proof}

\Cref{prop:candy-region} shows that the bridge class $3$ (and similarly class $5$) must both feed the central hyperedge (at rate $\lambda_4$, leaving
$\lambda_3-\lambda_4$) and absorb its triangle's imbalance $|\lambda_1-\lambda_2|$; when
the leftover cannot cover the imbalance, $(G,\lambda)$ leaves the cone. For instance
$\lambda=(1.5,1,1.1,1,1.1,1,1.5)$ is \emph{not} stabilizable: here
$\lambda_3-\lambda_4=0.1<0.5=|\lambda_1-\lambda_2|$, the forced solution has
$\mu_{\{2,3\}}=\mu_{\{5,6\}}=-0.2$, and \cref{thm:main}(ii) is violated by
$y=\mathbf 1_{\{2,3\}}-\mathbf 1_{\{1,4\}}$, for which $y^\top A_k\ge 0$ for every edge
yet $\langle y,\lambda\rangle=-0.4<0$.

\paragraph*{A stabilizable family.}
Fix $\alpha\in(0,1)$ and take
$\lambda_\alpha=(1,1,3\alpha,\alpha,3\alpha,1,1)$. Both conditions of
\cref{prop:candy-region} hold ($0<2\alpha<2$), so the candy is stabilizable for
\emph{every} $\alpha\in(0,1)$ (but not for $\alpha = 1$, where $\lambda_\alpha$ reaches
the boundary of the cone; $\alpha = 0$ leaves the model, three classes having zero
rate); the unique matching-rate vector puts $1-\alpha$ on the two
outer edges $\{1,2\},\{6,7\}$ and $\alpha$ on the other five edges (including the
hyperedge). By \cref{thm:main}, \vqml{} (defined in \cref{sec:sufficiency}) stabilizes the
whole family.

\paragraph*{Greedy policies fail.}
A policy is \emph{greedy} (non-idling) if, whenever an arriving item can be part of an
activable edge, some edge is activated at that arrival.\footnote{This is the non-idling
sense. It excludes commit-and-reserve policies such as the bounded-regret greedy
of~\cite{G24}, which assigns arrivals to optimal-basis configurations without matching them
on the spot (closer to \vqml{}'s virtual backlog than to non-idling matching) and uses
knowledge of $\lambda$; \cref{prop:candy-greedy} does not constrain those.} Match-the-longest
is greedy. On
the candy, greedy policies are provably unstable for small $\alpha$, in contrast with
\vqml{}.

\begin{proposition}[Greedy instability of the candy]\label{prop:candy-greedy}
For the family $\lambda_\alpha$ with $\alpha<\tfrac{2}{21}$, no greedy policy is
stable: no non-idling policy can make the chain positive recurrent. Specifically, $Q_4$ diverges (see \cref{rem:candy-linear}).
\end{proposition}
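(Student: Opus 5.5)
The plan is to argue by contradiction through rate conservation. Suppose some greedy policy makes the chain positive recurrent for $\lambda_\alpha$. Then every class is cleared at its arrival rate in the long run. In particular, class $4$ belongs only to $\{3,4,5\}$, so the hyperedge must be activated at long-run rate $r=\alpha$. Each such activation consumes one class-$3$ item that is \emph{not} used by the edges $\{1,3\},\{2,3\}$. I will show that under any non-idling rule the triangle $\{1,2,3\}$ can release class-$3$ items to the hyperedge only at a rate $f(\alpha)$ with $f(\alpha)<\alpha$ whenever $\alpha<\tfrac{2}{21}$. This contradiction gives the claim, and it also shows that $Q_4$ grows linearly, which is the content of \cref{rem:candy-linear}.

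The first step is structural. Greediness forces the set of nonempty classes to contain no activable edge at every post-decision instant. Since $\{1,2\},\{1,3\},\{2,3\}$ are all edges, at most one of $Q_1,Q_2,Q_3$ is positive. The triangle state is therefore captured by a single integer-valued walk: set $Z=Q_1-Q_2$ when $Q_3=0$, and record $Q_3$ separately when $Z=0$. Its transitions are as follows.
\begin{itemize}
\item Arrivals of classes $1$ and $2$ (rate $1$ each) move $Z$ by $\pm1$ toward or away from $0$, consistently with greedy matching on $\{1,2\}$.
\item A class-$3$ arrival either is matched with the queued $1$ or $2$, pulling $|Z|$ down by one, or is sent to the hyperedge, leaving $Z$ unchanged.
\item A class-$3$ arrival that finds $Z=0$ is queued or sent to the hyperedge.
\end{itemize}
Class-$3$ items feed the hyperedge at rate $\alpha$ in total. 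Hence class-$3$ items can push $|Z|$ toward $0$ at long-run rate at most $3\alpha$, and in fact only the part of class $3$ not spent on the hyperedge does so. Away from $0$, $|Z|$ is thus dominated by a birth--death chain with up-rate $1$ and down-rate at most $1+3\alpha$.

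The second step bounds the supply. Hyperedge activations using a class-$3$ item that arrived (or waited) while $Z=0$ happen at rate at most $3\alpha\,\pi(Z=0)$, up to boundary terms. For the dominating birth--death chain with restoring drift of order $\alpha$, $\pi(Z=0)$ is $O(\alpha)$. Explicitly, $\pi(0)=\beta/(2+\beta)$ for down-rate $1+\beta$, where $\beta\le 3\alpha$. The remaining hyperedge activations take class-$3$ items that found $Z\neq0$ and were diverted. Each such diversion removes a restoring step, which lowers the effective $\beta$ and hence $\pi(0)$. Optimizing the split between "restoring" and "diverted" class-$3$ mass gives an explicit bound $r\le f(\alpha)$. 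The right-hand triangle $\{5,6,7\}$ gives the same bound, and $r$ is at most the minimum of the two, since the hyperedge needs both a $3$ and a $5$.

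I expect the main obstacle to be making this comparison rigorous for an arbitrary, possibly history-dependent, non-idling rule. The policy may use internal state and may choose between $\{1,3\}$, $\{2,3\}$ and $\{3,4,5\}$ adaptively. My first attempt would be a stationary rate-balance (level-crossing) argument on the positive-recurrent chain rather than a pathwise coupling. Stationarity gives flow equations across each cut $\{|Z|\ge j\}$, which read $\pi(|Z|=j-1)\cdot(\text{up flow})=\pi(|Z|=j)\cdot(1+\text{restoring class-}3\text{ flow})$. Summing these yields $\pi(Z=0)$ in terms of the restoring rate alone, independently of the policy's internal state. Combining that expression with the conservation identity that the class-$3$ rate $3\alpha$ equals the restoring rate plus the hyperedge rate $r$ should produce an inequality that fails exactly when $\alpha<\tfrac{2}{21}$. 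I would then check that the constant $\tfrac{2}{21}$ comes out of this optimization, and tighten the accounting of class-$3$ items queued at $Z=0$ if it does not.
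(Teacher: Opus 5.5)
\textbf{There is a genuine gap.} The one-triangle bound $f(\alpha)<\alpha$ you aim for is false, so no optimization over the split between restoring and diverted class-$3$ mass can produce it.

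Nothing in the dynamics of $\{1,2,3\}$ caps the diverted rate $d$, the rate of class-$3$ arrivals that find $Z\neq0$ and are sent to $\{3,4,5\}$. Your trade-off shows only that diverting lowers the restoring rate and hence $\pi(Z=0)$. That shrinks the $Z=0$ contribution $3\alpha\,\pi(Z=0)$, but it leaves $r\ge d$ unconstrained.

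An argument that uses only the left triangle must allow the scenario in which $Q_4,Q_5>0$ whenever a class-$3$ item arrives. In that scenario, take a greedy rule that diverts half of the class-$3$ arrivals finding $Z\neq0$. Then $|Z|$ has up-rate $1$ and down-rate $1+\tfrac32\alpha$, so it is still positive recurrent. Meanwhile class $3$ is released to the hyperedge at rate about $\tfrac32\alpha>\alpha$.

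Taking the minimum with the symmetric bound from $\{5,6,7\}$ does not help, since both bounds are at least $\alpha$. For the same reason, the level-crossing identities you propose cannot yield $\tfrac{2}{21}$.

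\textbf{The missing idea is the joint requirement of the hyperedge.} By induction from the empty start, greediness ensures that no edge is ever activable from queued items alone. This needs, besides your triangle invariant, that $Q_3,Q_4,Q_5$ are never all positive. Consequently $\{3,4,5\}$ fires only when an arrival finds the other two of its classes queued. In particular, diverting a class-$3$ arrival requires $Q_5>0$ at that instant.

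Each bridge queue is rarely positive. While $Q_3>0$ one has $Q_1=Q_2=0$, so every class-$1$ or class-$2$ arrival is forced onto $\{1,3\}$ or $\{2,3\}$. Hence $Q_3$ is dominated by an $M/M/1$ queue of load $3\alpha/2$, which gives $\PP(Q_3>0)\le 3\alpha/2$. Symmetrically, $\PP(Q_5>0)\le 3\alpha/2$.

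PASTA then gives $r\le 3\alpha\,\PP(Q_5>0)+3\alpha\,\PP(Q_3>0)+\alpha\,\PP(Q_3>0,Q_5>0)\le\tfrac{21}{2}\alpha^2$. The throughput identity $r=\alpha$ for class $4$ then forces $\alpha\ge\tfrac{2}{21}$.

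The crux is the product of an $O(\alpha)$ arrival rate with an $O(\alpha)$ probability that the \emph{other} bridge queue is positive. Your $Z$-walk is unnecessary: the quantity to control is $Q_3$ itself, not $|Z|$.
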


\begin{proof}
Fix any greedy policy and suppose that the chain is positive
recurrent. We first establish two invariants, jointly by induction over arrival epochs
from the empty start: (a) in each triangle at most one of its three queues is positive;
and (b) $Q_3,Q_4,Q_5$ are not all positive. Both hold at the empty state. Assume (a) and
(b) just before an arrival. Then no edge is activable from queued items alone: a triangle
edge would need two positive queues in one triangle (excluded by (a)), and the hyperedge
$\{3,4,5\}$ would need $Q_3,Q_4,Q_5$ all positive (excluded by (b)). Hence every activable
edge contains the arriving item, and a greedy policy, which must activate some edge when it
can, activates one containing the arrival, thereby consuming it. So the arriving item never
remains beside a compatible neighbor, and checking the arrival classes one by one shows
(a) and (b) pass to the post-arrival state; the invariants follow.

In particular, when $Q_3>0$ we have $Q_1=Q_2=0$ by (a), so an arriving class $1$ or $2$
completes only $\{1,3\}$ or $\{2,3\}$ and is forced to remove one class-$3$ item. Thus
while $Q_3>0$ the queue $Q_3$ decreases at rate at least $\lambda_1+\lambda_2=2$, and it
increases only at class-$3$ arrivals, of rate $\lambda_3=3\alpha$. Coupling class-$3$
arrivals to births and class-$1$/$2$ arrivals to services, $Q_3$ is pathwise dominated (from
the empty start) by an $M/M/1$ queue of load $3\alpha/2$; hence the long-run fraction of time
with $Q_3>0$ is at most that queue's, namely $\tfrac{3\alpha}{2}$, and, the chain being
positive recurrent by hypothesis, equals the stationary probability: $\PP(Q_3>0)\le
\tfrac{3\alpha}{2}$. Symmetrically $\PP(Q_5>0)\le \tfrac{3\alpha}{2}$.

The hyperedge $\{3,4,5\}$ fires only when an arrival completes it, i.e.\ on a class-$3$
arrival with $Q_4,Q_5>0$, a class-$5$ arrival with $Q_3,Q_4>0$, or a class-$4$ arrival
with $Q_3,Q_5>0$. By PASTA its firing rate satisfies
\[
r_h \;\le\; 3\alpha\,\PP(Q_5>0)+3\alpha\,\PP(Q_3>0)+\alpha\,\PP(Q_3>0,Q_5>0)
\;\le\; \Big(\tfrac{9}{2}+\tfrac{9}{2}+\tfrac{3}{2}\Big)\alpha^2=\tfrac{21}{2}\alpha^2 .
\]
Class $4$ leaves the system only through the hyperedge, so positive recurrence forces
the throughput identity $r_h=\lambda_4=\alpha$ (the exact-balance argument of
\cref{prop:necessity}, Step~1, applied to class~$4$). Combining yields $\alpha\le\tfrac{21}{2}\alpha^2$,
i.e.\ $\alpha\ge\tfrac{2}{21}$. Hence no greedy
policy is positive recurrent for $\alpha<\tfrac{2}{21}$.
\end{proof}

\begin{remark}\label{rem:candy-linear}
The invariants (a) and (b), and the $M/M/1$ domination, are pathwise and use no recurrence assumption; with the strong law they give
$\liminf_{T\to\infty} Q_4(T)/T \ge \alpha-\tfrac{21}{2}\alpha^2 > 0$ almost surely for
$\alpha<\tfrac{2}{21}$: under any greedy policy the central queue diverges at least
linearly.
\end{remark}

The threshold $\tfrac{2}{21}\approx 0.095$ is deliberately conservative: the $M/M/1$
domination is loose. Simulations put the true match-the-longest threshold near
$\alpha_0\approx 0.45$. The point that matters here is the qualitative separation. \Cref{fig:candy-sim}
plots the mean central queue $\overline{Q_4}$ and the global delay under match-the-longest and under \vqml{}
across the family ($10^7$ arrivals per point, see~\cite{candynb} for details).

\definecolor{darkgray176}{RGB}{176,176,176}
\definecolor{darkorange25512714}{RGB}{255,127,14}
\definecolor{gray}{RGB}{128,128,128}
\definecolor{lightgray204}{RGB}{204,204,204}
\definecolor{steelblue31119180}{RGB}{31,119,180}

\begin{figure}[!ht]
\centering
\begin{subfigure}{\linewidth}
\centering
\begin{tikzpicture}	
	\begin{axis}[width=12cm, height=8cm,
		legend cell align={left},
		legend style={fill opacity=0.8, draw opacity=1, text opacity=1, draw=lightgray204},
		log basis y={10},
		tick align=outside,
		tick pos=left,
		x grid style={darkgray176},
		xlabel={$\alpha$},
		xmajorgrids,
		xmin=0, xmax=1,
		xminorgrids,
		xtick style={color=black},
		y grid style={darkgray176},
		ylabel={$\overline{Q_4}$ (mean size of central queue)},
		ymajorgrids,
		ymin=0.001, ymax=1000,
		ymode=log,
		ytick style={color=black}
		]
		\addplot [semithick, steelblue31119180]
		table [row sep=\\] {%
			0.45 2286.7122688\\0.46 64.08306\\0.47 29.7398972\\0.48 19.0804183\\0.49 14.2234257\\0.5 11.0414084\\0.51 9.1910888\\0.52 7.6832416\\0.53 6.6846177\\0.54 5.9376752\\0.55 5.2624664\\0.56 4.7453828\\0.57 4.2903827\\0.58 3.8958317\\0.59 3.551055\\0.6 3.2582841\\0.61 2.9926003\\0.62 2.7771704\\0.63 2.5656076\\0.64 2.3786882\\0.65 2.2216691\\0.66 2.0677091\\0.67 1.9152655\\0.68 1.7856184\\0.69 1.6601754\\0.7 1.5545498\\0.71 1.452527\\0.72 1.3563517\\0.73 1.2618299\\0.74 1.1714391\\0.75 1.0971712\\0.76 1.0238625\\0.77 0.9493186\\0.78 0.8804672\\0.79 0.8168422\\0.8 0.7564674\\0.81 0.6960784\\0.82 0.6396078\\0.83 0.5861354\\0.84 0.5345674\\0.85 0.4855473\\0.86 0.4399087\\0.87 0.3947679\\0.88 0.3551458\\0.89 0.3160284\\0.9 0.2783648\\0.91 0.244219\\0.92 0.2101807\\0.93 0.1801052\\0.94 0.1493978\\0.95 0.122585\\0.96 0.0939943\\0.97 0.0683535\\0.98 0.044741\\0.99 0.022537\\};
		\addlegendentry{\ml{} (greedy)}
		\addplot [semithick, darkorange25512714]
		table [row sep=\\] {%
			0.01 58.5291416\\0.02 30.0631547\\0.03 19.8769737\\0.04 15.2575022\\0.05 12.4161453\\0.06 10.5361278\\0.07 9.135511\\0.08 8.0783569\\0.09 7.2493809\\0.1 6.5735255\\0.11 6.0063501\\0.12 5.5491406\\0.13 5.1641332\\0.14 4.822137\\0.15 4.5161837\\0.16 4.2566181\\0.17 4.0229949\\0.18 3.8079322\\0.19 3.6168354\\0.2 3.4400009\\0.21 3.2795084\\0.22 3.1349379\\0.23 2.999115\\0.24 2.8806288\\0.25 2.7648786\\0.26 2.6563545\\0.27 2.555308\\0.28 2.4586647\\0.29 2.384749\\0.3 2.2955471\\0.31 2.2130826\\0.32 2.1335212\\0.33 2.0625276\\0.34 1.9917764\\0.35 1.924316\\0.36 1.8613907\\0.37 1.7990656\\0.38 1.740435\\0.39 1.6831978\\0.4 1.628878\\0.41 1.5755468\\0.42 1.524754\\0.43 1.4734065\\0.44 1.4260978\\0.45 1.3798935\\0.46 1.3345496\\0.47 1.2885799\\0.48 1.2481541\\0.49 1.2044803\\0.5 1.1625599\\0.51 1.123734\\0.52 1.0854269\\0.53 1.0500558\\0.54 1.0121719\\0.55 0.9756782\\0.56 0.9410177\\0.57 0.9064665\\0.58 0.872048\\0.59 0.839644\\0.6 0.8056607\\0.61 0.7739063\\0.62 0.7405958\\0.63 0.7079\\0.64 0.6772079\\0.65 0.6477234\\0.66 0.6184334\\0.67 0.5887726\\0.68 0.5599868\\0.69 0.5319193\\0.7 0.5037574\\0.71 0.4764818\\0.72 0.4516052\\0.73 0.4263037\\0.74 0.4002429\\0.75 0.375898\\0.76 0.3504033\\0.77 0.3258896\\0.78 0.3027577\\0.79 0.2800571\\0.8 0.2580462\\0.81 0.2376875\\0.82 0.2154755\\0.83 0.1954538\\0.84 0.1754852\\0.85 0.1557432\\0.86 0.136754\\0.87 0.1206557\\0.88 0.1042636\\0.89 0.0886918\\0.9 0.074564\\0.91 0.060864\\0.92 0.0487402\\0.93 0.0379048\\0.94 0.0280219\\0.95 0.0197654\\0.96 0.0125088\\0.97 0.006917\\0.98 0.0031325\\0.99 0.0008397\\};
		\addlegendentry{\vqml{}}
		\addplot [gray, dotted, line width=1.5pt]
		table [row sep=\\] {%
			0.0952380952380952 0.0004002854320785\\0.0952380952380952 4796.95771624986\\};
		\addlegendentry{$\alpha=2/21$}
		\addplot [black, dashed, line width=1.5pt]
		table [row sep=\\] {%
			0.445654386322003 0.0004002854320785\\0.445654386322003 4796.95771624986\\};
		\addlegendentry{$\alpha_0$ (empirical)}
	\end{axis}
\end{tikzpicture}
\caption{Mean central queue $\overline{Q_4}$.  For ML, the observed central queue grows sharply as $\alpha$ decreases toward	$\alpha_0$, whereas \vqml{} stays stable for every $\alpha$. Note that stabilizability fails in the limit $\alpha \to 0$ (three rates vanish and the problem degenerates into the two bipartite pairs $\{1,2\}$ and $\{6,7\}$), which explains why $\overline{Q_4}$ grows for \vqml{} when $\alpha \rightarrow 0$.\label{fig:candy-q4}}
\end{subfigure}
\begin{subfigure}{\linewidth}
	\centering
\begin{tikzpicture}	
	\begin{axis}[width=12cm, height=8cm,
		legend cell align={left},
		legend style={fill opacity=0.8, draw opacity=1, text opacity=1, draw=lightgray204},
		log basis y={10},
		tick align=outside,
		tick pos=left,
		x grid style={darkgray176},
		xlabel={$\alpha$},
		xmajorgrids,
		xmin=0, xmax=1,
		xminorgrids,
		xtick style={color=black},
		y grid style={darkgray176},
		ylabel={Delay (mean waiting time)},
		ymajorgrids,
		ymin=0.1, ymax=100,
		ymode=log,
		ytick style={color=black}
		]
		\addplot [semithick, steelblue31119180]
		table [row sep=\\] {%
			0.45 320.209077090909\\0.46 9.25929886426593\\0.47 4.45870013717421\\0.48 2.96788104619565\\0.49 2.28691039030956\\0.5 1.84262141333333\\0.51 1.5842235667107\\0.52 1.37578388743455\\0.53 1.23848687418936\\0.54 1.13656610539846\\0.55 1.04671626751592\\0.56 0.979649419191919\\0.57 0.922388410513141\\0.58 0.874882493796526\\0.59 0.835341156211562\\0.6 0.803828365853659\\0.61 0.777999661426844\\0.62 0.75998845323741\\0.63 0.74424242568371\\0.64 0.733340731132075\\0.65 0.72735114619883\\0.66 0.722980754060325\\0.67 0.721872566168009\\0.68 0.724896643835616\\0.69 0.730535968289921\\0.7 0.740922730337079\\0.71 0.753452296544036\\0.72 0.769320564159292\\0.73 0.789310823271131\\0.74 0.812172538126362\\0.75 0.839680886486486\\0.76 0.87249678111588\\0.77 0.910134760383387\\0.78 0.951207684989429\\0.79 0.999436946484785\\0.8 1.0514326875\\0.81 1.11173107549121\\0.82 1.18335070841889\\0.83 1.26710173292559\\0.84 1.36348053643725\\0.85 1.47353615075377\\0.86 1.60392170658683\\0.87 1.75251681863231\\0.88 1.92571084645669\\0.89 2.1352706744868\\0.9 2.3792389223301\\0.91 2.67413117647059\\0.92 3.05345865900383\\0.93 3.51464610846813\\0.94 4.11960989603025\\0.95 4.98097846948357\\0.96 6.26711641791045\\0.97 8.33138277108434\\0.98 12.1062694935543\\0.99 22.3255831747484\\};
		\addlegendentry{\ml{} (greedy)}
		\addplot [semithick, darkorange25512714]
		table [row sep=\\] {%
			0.01 36.1319094594595\\0.02 18.2707950724638\\0.03 11.8403064845606\\0.04 8.93026612149533\\0.05 7.16805303448276\\0.06 5.99005558823529\\0.07 5.12755024498886\\0.08 4.47818146929824\\0.09 3.96277799136069\\0.1 3.54929953191489\\0.11 3.20262815513627\\0.12 2.92374357438017\\0.13 2.69057230142566\\0.14 2.48650134538153\\0.15 2.3027303960396\\0.16 2.14766197265625\\0.17 2.00956337186898\\0.18 1.88470133079848\\0.19 1.77470242026266\\0.2 1.67505487037037\\0.21 1.5844526691042\\0.22 1.50423397111913\\0.23 1.42927832442068\\0.24 1.36366720070423\\0.25 1.30169311304348\\0.26 1.24430431271478\\0.27 1.19140971137521\\0.28 1.1424355704698\\0.29 1.10363424543947\\0.3 1.06057680327869\\0.31 1.02137873581848\\0.32 0.984544038461538\\0.33 0.951533312202853\\0.34 0.920341818181818\\0.35 0.891447891472868\\0.36 0.865171088957055\\0.37 0.839877116843703\\0.38 0.817031471471471\\0.39 0.795633982169391\\0.4 0.776238411764706\\0.41 0.757962983988355\\0.42 0.741461484149856\\0.43 0.725648159771755\\0.44 0.711857584745763\\0.45 0.699595286713287\\0.46 0.68877055401662\\0.47 0.678206159122085\\0.48 0.670157432065217\\0.49 0.662807940780619\\0.5 0.65582468\\0.51 0.651737252311757\\0.52 0.647841335078534\\0.53 0.646156044098573\\0.54 0.644093727506427\\0.55 0.644168777070064\\0.56 0.645999204545455\\0.57 0.648154042553191\\0.58 0.65233347394541\\0.59 0.65804504305043\\0.6 0.664805536585366\\0.61 0.6735123095526\\0.62 0.68301257793765\\0.63 0.695298002378121\\0.64 0.709086591981132\\0.65 0.724113005847953\\0.66 0.74034897911833\\0.67 0.760902681242808\\0.68 0.781633744292238\\0.69 0.805708595696489\\0.7 0.833136213483146\\0.71 0.862440880713489\\0.72 0.893505022123894\\0.73 0.930526410537871\\0.74 0.971247461873638\\0.75 1.0162620972973\\0.76 1.06690285407725\\0.77 1.12262249201278\\0.78 1.18182173361522\\0.79 1.24997951731375\\0.8 1.32307929166667\\0.81 1.40649195449845\\0.82 1.50330560574949\\0.83 1.61465070336391\\0.84 1.74302640688259\\0.85 1.88676775879397\\0.86 2.05350167664671\\0.87 2.24766896927651\\0.88 2.47024045275591\\0.89 2.74253196480938\\0.9 3.05339723300971\\0.91 3.43048407907425\\0.92 3.91422416666667\\0.93 4.51869861084681\\0.94 5.28763875236295\\0.95 6.40364661971831\\0.96 8.04042372201493\\0.97 10.6517267099166\\0.98 15.4644677900553\\0.99 28.7090556267155\\};
		\addlegendentry{\vqml{}}
		\addplot [gray, dotted, line width=1.5pt]
		table [row sep=\\] {%
			0.0952380952380952 0.0004002854320785\\0.0952380952380952 4796.95771624986\\};
		\addlegendentry{$\alpha=2/21$}
		\addplot [black, dashed, line width=1.5pt]
		table [row sep=\\] {%
			0.445654386322003 0.0004002854320785\\0.445654386322003 4796.95771624986\\};
		\addlegendentry{$\alpha_0$ (empirical)}
	\end{axis}
	
\end{tikzpicture}
\caption{Average delay, obtained from average queue sizes by Little's law. Compared with \Cref{fig:candy-q4}, note the high delay for $\alpha\rightarrow 1$: $(G, \lambda_1)$ is not stabilizable. The affected classes are the outer ones (1, 2, 6, and 7), not the central class 4.\label{fig:candy-delay}}
\end{subfigure}
\caption{Candy family $\lambda_\alpha=(1,1,3\alpha,\alpha,3\alpha,1,1)$ under match-the-longest (greedy) and \vqml{} policies.
$10^7$ simulated arrivals per point. Match-the-longest is shown only where the run is
empirically stable. The provable greedy-instability bound $\alpha<2/21$ and the empirical bound $\alpha_0 \approx0.45$ are marked. Produced in the companion notebook~\cite{candynb}.
	}
\label{fig:candy-sim}
\end{figure}

\paragraph*{Non-stabilizable hyperedges.}

The candy isolates the mechanism behind the non-stabilizable hypergraphs cataloged
by Rahme and Moyal~\cite{RM21}: a class that lies in a single hyperedge is drained only
jointly with the other classes of that edge, which couples their rates. The extreme case
is a lone hyperedge $\{1,2,3\}$: here $A$ has rank $1$, so $A$ is not surjective and
\cref{thm:main}(iv) fails for every $\lambda$ (three classes cannot be balanced by one
control). Adding an edge $\{3,4\}$ (so that the hyperedge $\{1,2,3\}$ has two degree-one classes,
$1$ and $2$) leaves $A$ of rank $2<4$; for $\lambda=(1,1,2,1)$ the left-kernel vector $y$ with
$y_1=1,y_2=-1$ certifies non-stabilizability, matching~\cite{RM21}. In each case the
obstruction is exactly the failure of the support-surjectivity of \cref{thm:main}(iv);
the negative examples of~\cite{RM21} we have checked fail this condition or leave the cone,
and condition~(iv) recovers their exactly-solved region for the complete $3$-uniform
case~\cite[Thm.~1]{RM21}.

\begin{remark}[Discarding versus abandonment]
A mono-edge $A_k=e_i$ is a \emph{controlled discard} of class~$i$: the policy may remove a
class-$i$ item on its own. Adding the mono-edges $\{e_i : i\in S\}$ for a set
$S\subseteq\V$ therefore models discarding (loss) on $S$, and \cref{thm:main} applies verbatim
with $A$ extended by those columns. Discarding only enlarges $\cone(A)$, so it can only help:
for $S=\V$ the mono-edges already span $\R^n_{\ge0}$, hence
$\lambda\in\R^n_{>0}\subseteq\interior\cone(A)$ for every $\lambda$ and the problem is
stabilizable unconditionally. This is the familiar fact that abandonment everywhere trivializes stability.
However, for a strict subset $S\subsetneq\V$ the criterion is genuine: the non-discardable classes must still be balanced by matches. Partial discarding is thus covered by the same characterization, with no separate theory, and connects to the loss-queue viewpoint of~\cite{C22}.

This is the \emph{controlled} face of abandonment. Genuine \emph{spontaneous} departures, where items leave at exogenous rates (outside the controller's choice), are a different, non-monotone mechanism. Because a policy may act several times per epoch (\cref{sec:model}),
they can be modeled by appending an environment-driven departure step after the policy's move, with their own self-edges to keep controlled and uncontrolled removals distinct; but they need not preserve stabilizability, and a maximally stable witness would have to be re-derived (\vqml{} rests on the pathwise slaving of \cref{lem:pathwise}, which forced
removals break). On the candy, for instance, strong spontaneous departures at the bridge classes $3$ and $5$ can destabilize an otherwise stabilizable instance: the bridge items vanish before they can be held together with class~$4$, starving the hyperedge $\{3,4,5\}$, which is class~$4$'s only outlet, so that, for strong enough departure rates, one expects $Q_4$ to diverge under \emph{every} policy, greedy or idling; we do not prove this here. A full treatment of spontaneous departures is left to future work.
\end{remark}

\section{Proof of \texorpdfstring{\Cref{thm:main}}{the theorem}}\label{sec:proof}
We prove the cycle (i)\,$\Rightarrow$\,(ii)\,$\Rightarrow$\,(iii)\,$\Rightarrow$\,(i),
folding in the support-form relaxation~(iv) by elementary geometry, in four steps.
\begin{itemize}
\item \emph{Geometry} (\cref{sec:geometry}). Convex duality collapses (ii), (iii) and
$\lambda\in\interior\cone(A)$ into one equivalence class (\cref{lem:geometry}); (iv) is
folded in at assembly. From here the analytic content is carried by
$\lambda\in\interior\cone(A)$ alone.
\item \emph{Necessity} (\cref{sec:necessity}). A short argument combining stationarity
with the central limit theorem shows any stabilizing policy forces the exact balance
$\bar\lambda=A\bar\mu$ and admits no boundary certificate (\cref{prop:necessity}); this is
(i)\,$\Rightarrow$\,(ii), and it uses only exogeneity of arrivals and bounded increments,
so it covers the whole Markov policy class.
\item \emph{Sufficiency} (\cref{sec:sufficiency}), the substantial step. We exhibit a
single $\lambda$-oblivious policy, \vqml{}, and prove it stabilizes every $(G,\lambda)$
with $\lambda\in\interior\cone(A)$; this gives (iii)\,$\Rightarrow$\,(i) and, the policy
being common to all instances, maximal stability at once. As this step spans several
lemmas, it opens with its own plan.
\item \emph{Assembly} (\Cref{sec:putting-things-together}). The three implications are chained
and (iv) is folded in, closing the equivalence.
\end{itemize}

\subsection{Convex geometry: (ii) \texorpdfstring{$\Leftrightarrow$}{<=>} (iii)}\label{sec:geometry}

\begin{lemma}[Convex duality]\label{lem:geometry}
For $\lambda \in \R^n_{>0}$ the following are equivalent:
(a) $\lambda \in \interior \cone(A)$;
(b) $A$ is surjective and $\exists \mu \in \R_{>0}^m$ with $A\mu = \lambda$;
(c) condition (ii) of \cref{thm:main}.
\end{lemma}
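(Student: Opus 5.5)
I will prove the cycle (a)$\Rightarrow$(c)$\Rightarrow$(b)$\Rightarrow$(a). All three directions are finite-dimensional convex geometry: a separation/Farkas argument for the cone $\cone(A)$, which is finitely generated and hence closed, plus one perturbation step.

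\emph{(b)$\Rightarrow$(a).} Let $\mu\in\R^m_{>0}$ with $A\mu=\lambda$, and let $\varepsilon=\min_k\mu_k>0$. Since $A$ is surjective, choose a right inverse $B$ of $A$ (for instance $B=A^\top(AA^\top)^{-1}$, which exists because $AA^\top$ is invertible when $A$ has full row rank). For $\|v\|$ small enough, $\|Bv\|_\infty<\varepsilon$, so $\mu+Bv\in\R^m_{>0}$ and $A(\mu+Bv)=\lambda+v$. Hence a whole neighbourhood of $\lambda$ lies in $\cone(A)$.

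\emph{(a)$\Rightarrow$(c).} Let $y\neq0$ with $\langle y,A_k\rangle\ge0$ for all $k$. Then $\langle y,z\rangle\ge0$ on all of $\cone(A)$. Since $\lambda$ is interior, $\lambda-\delta y\in\cone(A)$ for small $\delta>0$. This gives $\langle y,\lambda\rangle\ge\delta\|y\|^2>0$.

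\emph{(c)$\Rightarrow$(b).} This is the step requiring the most care, and I split it in two.

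First, surjectivity. If $A$ were not surjective, there would be $y\neq0$ with $y^\top A=0$. Both $y$ and $-y$ then satisfy the hypothesis of (ii) (with equality). So (ii) demands both $\langle y,\lambda\rangle>0$ and $\langle -y,\lambda\rangle>0$, which is a contradiction.

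Second, $\lambda\in\interior\cone(A)$. Suppose instead $\lambda\notin\interior\cone(A)$. Since $\cone(A)$ is a closed convex cone, $\lambda$ then lies either outside it or on its boundary. In either case a separating (respectively supporting) hyperplane through the origin gives $y\neq0$ with $\langle y,z\rangle\ge0$ on $\cone(A)$ and $\langle y,\lambda\rangle\le0$. For the outside case this is Farkas' lemma; for the boundary case it is the supporting hyperplane theorem, and the hyperplane passes through $0$ because $\cone(A)$ is a cone. Such a $y$ contradicts (ii). Note that the interior is nonempty by the surjectivity just proved.

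It remains to go from interior to a strictly positive solution. I expect this to be the main obstacle, since an interior point need not come with a representation having all coordinates positive. The trick is as follows. Let $c=\sum_k A_k=A\mathbf 1\in\cone(A)$. Because $\lambda$ is interior, $\lambda-\eta c\in\cone(A)$ for some small $\eta>0$, say $\lambda-\eta c=A\nu$ with $\nu\ge0$. Then $\mu=\nu+\eta\mathbf 1>0$ and $A\mu=\lambda$, which is (b). The same trick also shows (a)$\Rightarrow$(b) directly.

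Finally, the hypothesis $\lambda\in\R^n_{>0}$ is never actually used beyond the standing assumptions of the lemma.
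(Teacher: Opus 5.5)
Your proof is correct and uses the same ingredients as the paper: openness of the surjective map $A$ for (b)$\Rightarrow$(a) (you make this explicit via the right inverse $A^\top(AA^\top)^{-1}$), the interior lying in the open half-space for (a)$\Rightarrow$(c), separation from the closed cone for (c)$\Rightarrow$(a), and the perturbation $\lambda-\eta A\mathbf 1\in\cone(A)$ to obtain a strictly positive solution. The differences are only organizational: the paper proves (a)$\Leftrightarrow$(b) and (a)$\Leftrightarrow$(c) rather than a cycle, and your separate left-kernel argument for surjectivity is redundant once $\lambda\in\interior\cone(A)$ is established, since a cone with nonempty interior is full-dimensional.
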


\begin{proof}
Throughout, $\cone(A)$ is a finitely generated convex cone, hence \emph{closed}
(Minkowski--Weyl).

\emph{(b) $\Rightarrow$ (a).} A surjective linear map is open, so $A(\{\mu > 0\})$
is an open subset of $\R^n$ (not merely of $\cone(A)$); it contains $\lambda$ and is
contained in $\cone(A)$, whence $\lambda \in \interior\cone(A)$.

\emph{(a) $\Rightarrow$ (b).} Full-dimensionality of $\cone(A)$ forces
$\operatorname{rank} A = n$, i.e.\ surjectivity. Let $w = A\mathbf{1} =
\sum_k A_k$. Since $\lambda \in \interior\cone(A)$, $\lambda - \varepsilon w \in
\cone(A)$ for small $\varepsilon > 0$: $\lambda - \varepsilon w = A \nu$ with
$\nu \geq 0$, hence $\lambda = A(\nu + \varepsilon \mathbf{1})$ with $\nu +
\varepsilon\mathbf{1} > 0$.

\emph{(a) $\Rightarrow$ (c).} Let $y \neq 0$ with $\langle y, A_k \rangle \geq 0$ for all $k \in \Ed$. Then
$\cone(A) \subseteq H_y := \{x : \langle y, x \rangle \geq 0\}$, so $\interior\cone(A)
\subseteq \interior H_y = \{x : \langle y, x \rangle > 0\}$. As $\lambda \in \interior \cone(A)$, $\langle y, \lambda \rangle > 0$.

\emph{(c) $\Rightarrow$ (a).} Contrapositive: assume $\lambda \notin
\interior\cone(A)$. Separating $\lambda$ from the closed convex set
$\cone(A)$ yields a direction $y \neq 0$ with $\langle y, x \rangle \geq
\langle y, \lambda \rangle$ for all $x \in \cone(A)$. A linear form bounded
below on a cone is nonnegative on it (otherwise scaling some $x$ would
drive it to $-\infty$), so $\langle y, A_k \rangle \geq 0$ for every
$k \in \Ed$; taking $x = 0$ gives $\langle y, \lambda \rangle \leq 0$:
$y$ violates (c).
\end{proof}

\subsection{Necessity: (i) \texorpdfstring{$\Rightarrow$}{=>} (ii)}\label{sec:necessity}

\begin{proposition}\label{prop:necessity}
If $(G, \lambda)$ is stabilizable, then condition (ii) holds.
\end{proposition}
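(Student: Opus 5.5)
We assume a stabilizing policy $\Phi$ and let $\pi$ be the stationary distribution of the embedded chain on the communicating class of the empty state. We then derive (ii) in two steps: an exact balance identity, and the exclusion of boundary certificates. The description already given of the argument (``stationarity with the central limit theorem'') fixes the overall route.

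\emph{Step 1 (exact balance).} Write $N_k(T)$ for the number of activations of hyperedge $k$ up to epoch $T$, and $C_i(T)$ for the number of class-$i$ arrivals. Pathwise, $X(T)=C(T)-AN(T)$. Under $\pi$, the per-epoch activation count is bounded by assumption (b), so $\bar\mu_k:=\E_\pi[\text{activations of }k\text{ per epoch}]$ is finite. By the ergodic theorem for positive recurrent chains, $N(T)/T\to\bar\mu$ almost surely, and by the strong law, $C(T)/T\to\bar\lambda$. Positive recurrence also gives $X(T)/T\to0$ almost surely along the chain: $X$ returns to the finite-mean regeneration cycles, and increments are bounded, so cycle maxima are integrable. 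Hence $\bar\lambda=A\bar\mu$ with $\bar\mu\geq0$.

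\emph{Step 2 (no certificate).} Let $y\neq0$ satisfy $\langle y,A_k\rangle\ge0$ for all $k$. Step 1 gives $\langle y,\bar\lambda\rangle=\sum_k\bar\mu_k\langle y,A_k\rangle\ge0$, so it remains to exclude equality. Suppose $\langle y,\lambda\rangle=0$. Then every $k$ with $\bar\mu_k>0$ has $\langle y,A_k\rangle=0$. For edges with $\bar\mu_k=0$, stationarity means those activations occur with $\pi$-probability zero per epoch, hence almost never along the recurrent chain. Therefore, almost surely, for all $T$ after some point,
\[
\langle y,X(T)\rangle=\langle y,C(T)\rangle-\textstyle\sum_k N_k(T)\langle y,A_k\rangle=\langle y,C(T)\rangle - c,
\]
where $c$ is a random constant. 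The process $\langle y,C(T)\rangle$ is a random walk with i.i.d.\ steps $y_{a(t)}$ of mean $\langle y,\bar\lambda\rangle=0$. Its variance is positive, because $y\neq0$ and every class has positive probability. By the central limit theorem (or recurrence of a nondegenerate mean-zero walk), $|\langle y,C(T)\rangle|/\sqrt T$ does not converge to $0$, and in particular $\limsup|\langle y,C(T)\rangle|=\infty$ with positive frequency of large values. This contradicts tightness: under $\pi$, $\langle y,X(T)\rangle$ has a fixed law, so $\PP(|\langle y,X(T)\rangle|>K)$ is small uniformly in $T$, whereas the CLT forces $\PP(|\langle y,C(T)\rangle-c|>K)\to1$. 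Hence $\langle y,\lambda\rangle>0$, which is (ii).

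\emph{Main obstacle.} The delicate point is Step 2's handling of edges with $\langle y,A_k\rangle>0$. It is cleaner to avoid the ``almost never'' argument and use a supermartingale-type bound instead. Define $Z(T)=\langle y,C(T)\rangle-\langle y,X(T)\rangle=\sum_k N_k(T)\langle y,A_k\rangle$; this is nondecreasing, with $Z(T)/T\to0$ by Step 1. Then $\langle y,C(T)\rangle=\langle y,X(T)\rangle+Z(T)\ge\langle y,X(T)\rangle$. Combined with $\langle y,X(T)\rangle$ being tight, this gives $\PP(\langle y,C(T)\rangle<-K)\le\PP(\langle y,X(T)\rangle<-K)$, which is uniformly small. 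That contradicts the CLT, under which $\PP(\langle y,C(T)\rangle<-K)\to\tfrac12$. Tightness of $\langle y,X(T)\rangle$ from the empty start follows from convergence in total variation to $\pi$ for an aperiodic chain; in the periodic case we use Cesàro averages of the probabilities, where the CLT limit $\tfrac12$ survives averaging. This version needs only bounded increments and exogenous arrivals, matching the claimed scope.
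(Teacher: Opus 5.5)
Your proposal is correct and follows the paper's architecture: exact balance $\bar\lambda=A\bar\mu$, then a boundary certificate is excluded by playing the CLT against tightness. Your first version of Step~2 is essentially the paper's argument. The paper, however, works throughout with the stationary version $S(0)\sim\pi$. There, edges with $\langle y,A_k\rangle>0$ are a.s.\ never activated (a countable union over $t$), and tightness of $\langle y,X(t)\rangle$ is automatic. This avoids the mix of empty start and $\pi$ in your first draft.

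Two ingredients genuinely differ. In Step~1 you invoke the renewal/ergodic theorem at the empty state, and get $X(T)/T\to0$ from integrable cycle maxima. The paper uses only stationarity and bounded increments, through $\E[\min(X_i(t+1),M)-\min(X_i(t),M)]=0$ and $M\to\infty$, with no ergodic theorem and no integrability of $X$. That economy is what lets its subsequent remark extend necessity to any invariant probability, or to mere tightness.

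In Step~2, your second version is more robust than the paper's exact identity. It uses the one-sided pathwise bound $\langle y,C(T)\rangle\ge\langle y,X(T)\rangle$, which follows from $\langle y,A_k\rangle\ge0$ for all $k$, against the lower-tail CLT. It needs no dead-edge deduction. In fact it makes Step~1 unnecessary for (ii): if $\langle y,\lambda\rangle<0$, the walk drifts to $-\infty$ and $\PP(\langle y,C(T)\rangle<-K)\to1$ outright. Its price is that, from the empty start, tightness must be imported through total-variation convergence (aperiodic case) or Ces\`aro convergence (periodic case) to $\pi$. Writing the inequality as $\langle y,C(T)\rangle\ge\langle y,X(T)\rangle-\langle y,X(0)\rangle$ under $S(0)\sim\pi$ would make tightness free, as in the paper.
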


\begin{proof}
Let $\Phi$ stabilize $(G,\lambda)$: the embedded chain $(S(t))_{t \in \N}$ (queues
$X(t)$ plus internal state, observed at arrival epochs) is positive recurrent on the
communicating class of its initial state, hence admits a unique stationary
distribution $\pi$ on that class. Throughout the proof we work with the
\emph{stationary version}: $S(0) \sim \pi$. Let $a(t) \in \{e_1, \dots, e_n\}$ denote
the arrival at epoch $t$ (i.i.d.\ with law $\bar\lambda$, independent of
$(S(u))_{u \leq t}$, as arrivals are exogenous), let $s_k(t) \in \N$ denote the number
of activations of hyperedge~$k$ at epoch~$t$ (uniformly bounded, by assumption (b)),
and $N(t) = \sum_{u < t} a(u)$, $M_k(t) = \sum_{u < t} s_k(u)$, so that
\begin{equation}\label{eq:balance}
X(t) \;=\; X(0) + N(t) - A\,M(t) \;\geq\; 0 .
\end{equation}
By stationarity, $\bar\mu_k := \E[s_k(t)] \in [0, \infty)$ does not depend on $t$.

\emph{Step 1: exact balance, $\bar\lambda = A \bar\mu$.}
The one-step increment $\Delta(t) = X(t+1) - X(t) = a(t) - A s(t)$ is uniformly
bounded, say $\|\Delta(t)\|_\infty \leq C$. For $M > 0$, stationarity of $X$ gives
$\E[\min(X_i(t+1), M) - \min(X_i(t), M)] = 0$, while
$|\min(X_i(t+1), M) - \min(X_i(t), M) - \Delta_i(t)| \leq 2C \,
\mathbf{1}\{X_i(t) \geq M - C\}$. Taking expectations and letting $M \to \infty$
(the correction term vanishes because $X_i(t)$ is an a.s.\ finite random variable):
$\E[\Delta_i(t)] = 0$ for every $i$, that is, $\bar\lambda = A \bar\mu$ with
$\bar\mu \geq 0$. Note: no ergodic theorem and no integrability of $X$ are used,
only stationarity and bounded increments.

\emph{Step 2: boundary certificates are impossible.}
Suppose, for contradiction, that some $y \neq 0$ has $\langle y, A_k \rangle \geq 0$ for all $k \in \Ed$ and
$\langle y, \lambda \rangle \leq 0$; dividing by $\Lambda$, also
$\langle y, \bar\lambda \rangle \leq 0$. Since $\bar\lambda = A\bar\mu$,
$\langle y, \bar\lambda\rangle = \sum_k \langle y, A_k \rangle \bar\mu_k
\geq 0$; this forces
$\langle y, \bar\lambda \rangle = 0$ and, term by term,
\begin{equation}\label{eq:dead-edges}
\bar\mu_k = 0 \quad \text{for every } k \text{ with } \langle y, A_k \rangle > 0 .
\end{equation}
For such $k$, $\E[s_k(t)] = 0$ with $s_k(t) \geq 0$ implies $s_k(t) = 0$ a.s., for
every $t$; by a countable union, almost surely \emph{no} hyperedge with
$\langle y, A_k \rangle > 0$ is ever activated. Hence \cref{eq:balance} gives, almost surely,
\[
\langle y, X(t)\rangle \;=\; \langle y, X(0)\rangle + S_t,\text{ with }
S_t := \langle y, N(t) \rangle .
\]
The random variables $\langle y, X(t)\rangle$ are tight in $t$ (their law does not
depend on $t$), hence so is $S_t = \langle y, X(t)\rangle - \langle y, X(0)\rangle$
(a difference of two tight sequences is tight:
$\PP(|U - V| > 2K) \leq \PP(|U| > K) + \PP(|V| > K)$).
But $S_t$ is a sum of $t$ i.i.d.\ increments taking value $y_i$ with probability
$\bar\lambda_i > 0$; the increments are not a.s.\ zero (some $y_i \neq 0$, and every
class has positive arrival probability) and have mean exactly
$\langle y, \bar\lambda\rangle = 0$ and variance $\sigma^2 = \sum_i \bar\lambda_i
y_i^2 > 0$, so by the central limit theorem
$\PP(|S_t| \leq K) \to 0$ for every fixed $K$: $S_t$ is \emph{not} tight.
Contradiction.
\end{proof}

\begin{remark}
Step 2 quantifies over arbitrary stationary policies within the Markov class: no
greediness, no matching-at-arrival restriction, and no knowledge of how ties are broken
are used, only exogeneity of the arrivals and the exact-balance structure~\labelcref{eq:balance}.
The argument uses only the existence of \emph{some} invariant probability for the chain, so
condition~(ii) is necessary under weaker stability notions as well: positive recurrence of
any reachable class, or (by a Ces\`aro-averaging argument) mere tightness of the
queue-length laws.
\end{remark}

\subsection{Sufficiency: (iii) \texorpdfstring{$\Rightarrow$}{=>} (i), via a maximally stable policy}
\label{sec:sufficiency}

\paragraph*{Plan of the sufficiency proof.}
Fix $(G,\lambda)$ with $\lambda\in\interior\cone(A)$; we build one policy, \vqml{}, and
show it makes the physical system positive recurrent. The argument runs in three stages:
the first is purely about an auxiliary chain, the last two about the real system.

\emph{Stage 1: an autonomous virtual chain, stable by design}
(\cref{lem:bridge,lem:foster,lem:access,prop:virtual-pr}). \vqml{} maintains a
\emph{signed} virtual queue $Q\in\Z^n$ (coordinates may go negative) driven by MaxWeight\footnote{Strictly speaking, \vqml{} is a virtual-queue \emph{MaxWeight} policy; the \ml{} suffix, introduced in~\cite{CMV25}, echoes Match-the-Longest (\ml{}), its greedy counterpart.},
decoupled from the feasibility constraints of the physical queues. On this chain alone the
interior hypothesis $\lambda\in\interior\cone(A)$ gives a uniform inward drift in every
direction (\cref{lem:bridge}), making $\|Q\|_2^2$ a Foster--Lyapunov function with negative
drift outside a finite set (\cref{lem:foster}); a separate deterministic \emph{steering}
argument shows the origin is reachable from every state (\cref{lem:access}), and drift plus
reachability give positive recurrence of $Q$ on the communicating class of $0$
(\cref{prop:virtual-pr}). This stage never mentions physical items.

\emph{Stage 2: the physical system is slaved to $Q$}
(\cref{lem:pathwise,cor:regen}). The real state is a triple $(X,Q,B)$: physical
queues $X$, the same virtual queue $Q$, and a FIFO backlog $B$ of matchings decided
virtually but not yet physically completed. Following the pipeline pathwise
(\cref{lem:pathwise}) shows the physical quantities are governed by $Q$: the unassigned
items equal $Q^+$, the residual demand equals $Q^-$, and the backlog and total item count
are bounded in terms of $\|Q\|_1$ (\cref{lem:pathwise}). In particular $Q(t)=0$ forces the
whole system empty (\cref{cor:regen}), so the virtual and physical chains regenerate at the
same instants.

\emph{Stage 3: transfer} (\cref{lem:transfer}). Since regenerations coincide, the finite
mean return time of $Q$ to $0$ from Stage~1 is the mean return time of the full state to
empty, giving positive recurrence of the physical system; a renewal--reward count along
regeneration cycles identifies the long-run activation rates $\bar\mu$ with
$A\bar\mu=\bar\lambda$, so \vqml{} matches every class at its full arrival rate. This proves
(iii)\,$\Rightarrow$\,(i), and since \vqml{} depends only on $G$, it is maximally stable.

\subsubsection{The policy \vqml{}}
Following Nazari and Stolyar~\cite{NS19}, run a \emph{virtual} system alongside the
physical one. The virtual queue vector $Q(t) \in \Z^n$ (signed, no reflection) starts
at $Q(0) = 0$. At each arrival epoch $t$ (arrival vector $a(t) \in \{e_1, \dots,
e_n\}$), the controller activates a multiset of at most two hyperedges, chosen by
\emph{signed MaxWeight}:
\begin{equation}\label{eq:vqml}
s(t) \in \operatorname*{arg\,max}_{s \in \N^m,\ \|s\|_1 \leq 2}
\ \langle Q(t), A s \rangle ,
\qquad
Q(t+1) = Q(t) + a(t) - A\,s(t),
\end{equation}
with the \emph{canonical tie-breaking rule}: idle ($s = 0$) whenever
$\max_k \langle Q(t), A_k\rangle \leq 0$; otherwise take the lexicographically
smallest maximizer of smallest $\ell_1$-norm. This is a deterministic, state-only
rule, so $(Q(t))$ is a time-homogeneous Markov chain. Only the \emph{idle} clause
matters: for adversarial rules that activate zero-score hyperedges at ties, the
state $0$ can be transient (see \cref{rem:tiebreak}), whereas any rule with the
idle clause works (see \cref{lem:access}). Note $\langle Q(t), A s(t)\rangle \geq 0$
always. Virtually activated hyperedges join a FIFO backlog of \emph{incomplete
matchings}, completed by the mechanism specified before \cref{lem:pathwise} below
(Nazari--Stolyar's completion mechanism, \cite[Sec.~3.3--3.4]{NS19}).
The policy uses no knowledge of $\lambda$ and no parameter: scaling $Q$ by any
positive factor changes neither the argmax nor the canonical selection (the
idle clause tests a sign, and the tie-break depends only on the argmax set).

\begin{lemma}[Interior point yields drift mixtures]\label{lem:bridge}
Assume (iii) and let $\bar\lambda = \lambda / \Lambda$. There exists
$\varepsilon > 0$ such that for every sign vector $\sigma \in \{-1, +1\}^n$ there is
$\varphi_\sigma \in \R_{\geq 0}^m$ with $\|\varphi_\sigma\|_1 \leq 2$ and
$A \varphi_\sigma = \bar\lambda + \varepsilon \sigma$.
\end{lemma}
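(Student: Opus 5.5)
The plan is to derive the statement from \cref{lem:geometry}, plus one elementary $\ell_1$ estimate that yields the norm bound $\|\varphi_\sigma\|_1\le 2$ automatically.

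\emph{Step 1: a ball around $\bar\lambda$ lies in the cone.} By (iii) and \cref{lem:geometry}, $\lambda\in\interior\cone(A)$. Since $\cone(A)$ is a cone, its interior is invariant under positive scaling, so $\bar\lambda=\lambda/\Lambda\in\interior\cone(A)$ as well. Hence there is $r>0$ with $\{x:\|x-\bar\lambda\|_\infty\le r\}\subseteq\cone(A)$. I would pick any $\varepsilon$ with $0<\varepsilon\le\min(r,1/n)$. Every point $\bar\lambda+\varepsilon\sigma$ with $\sigma\in\{-1,+1\}^n$ lies in this $\ell_\infty$-ball, so by definition of $\cone(A)$ there is $\varphi_\sigma\in\R^m_{\ge0}$ with $A\varphi_\sigma=\bar\lambda+\varepsilon\sigma$. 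Because $\varepsilon$ is chosen once, before $\sigma$, it is uniform over all $2^n$ sign vectors; no compactness argument is needed.

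\emph{Step 2: the norm bound.} The key observation is that any nonnegative $\varphi$ satisfies
\[
\|A\varphi\|_1=\sum_i\sum_k A_{i,k}\varphi_k=\sum_k\|A_k\|_1\,\varphi_k\;\ge\;\|\varphi\|_1 .
\]
Here the first equality uses that all entries of $A$ and $\varphi$ are nonnegative, and the inequality uses that each column $A_k\in\N^n\setminus\{0\}$ has $\|A_k\|_1\ge1$. Applying this to $\varphi_\sigma$ gives
\[
\|\varphi_\sigma\|_1\;\le\;\|\bar\lambda+\varepsilon\sigma\|_1\;\le\;\|\bar\lambda\|_1+n\varepsilon\;=\;1+n\varepsilon\;\le\;2,
\]
using $\|\bar\lambda\|_1=1$ and $\varepsilon\le 1/n$. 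This completes the proof.

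\emph{Main obstacle.} There is no real obstacle. The only point needing care is the budget $\|\varphi_\sigma\|_1\le2$, which at first sight seems to require controlling the particular cone decomposition chosen. The integrality of $A$, with no zero column, removes this issue: every nonnegative representation of a vector of $\ell_1$-mass at most $2$ automatically has $\ell_1$-mass at most $2$. This is exactly why a per-epoch budget of two activations suffices in \eqref{eq:vqml}.
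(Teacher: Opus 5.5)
Your proof is correct, but it gets the budget bound by a different mechanism than the paper.

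The paper starts from a strictly positive solution and sets $\bar\mu=\mu/\Lambda$. It uses the column-mass identity $\sum_k\bar\mu_k\|A_k\|_1=1$ only to place $\bar\mu$ in the interior of $\mathcal D=\{\varphi\ge0,\ \|\varphi\|_1\le2\}$. It then applies the open-mapping property of the surjective $A$ to $\interior\mathcal D$, so the budget is built into the choice of preimage. You instead use $\bar\lambda\in\interior\cone(A)$, take an arbitrary nonnegative preimage of each point $\bar\lambda+\varepsilon\sigma$, and bound its mass afterwards via $\|\varphi\|_1\le\sum_k\|A_k\|_1\varphi_k=\|A\varphi\|_1$.

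Both routes rest on the same two ingredients: openness of a surjective $A$ (which reaches you through \cref{lem:geometry}) and $\|A_k\|_1\ge1$. Yours gives a sharper picture. Every nonnegative representation respects the budget, so the choice of $\varphi_\sigma$ does not matter, and the budget is never the binding constraint.

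Your extra condition $\varepsilon\le1/n$ is in fact redundant. Since $A\ge0$, $\cone(A)\subseteq\R^n_{\ge0}$, so a cube $\bar\lambda+[-r,r]^n$ inside the cone forces $r\le\min_i\bar\lambda_i\le1/n$. Hence the admissible $\varepsilon$ are exactly those for which that cube lies in $\cone(A)$. By convexity, this is the same as requiring all $2^n$ points $\bar\lambda+\varepsilon\sigma$ to lie in it.

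The paper's version is self-contained, since it does not invoke \cref{lem:geometry}. It also exhibits directly the slack of $\bar\mu$ inside $\mathcal D$, which is the quantity \cref{rem:budget} uses to explain why a unit budget would not suffice.
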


\begin{proof}
Let $\mu > 0$ solve $A\mu = \lambda$ and set $\bar\mu = \mu/\Lambda$. Summing the
rows of $A\bar\mu = \bar\lambda$: $\sum_k \bar\mu_k \|A_k\|_1 = \|\bar\lambda\|_1 = 1$,
and $\|A_k\|_1 \geq 1$, so $\|\bar\mu\|_1 \leq 1$. Thus $\bar\mu$ lies in the interior
of the domain $\mathcal{D} = \{\varphi \geq 0,\ \|\varphi\|_1 \leq 2\}$ relative to $\R^m$
(componentwise positive, budget slack $\geq 1$). Since $A$ is surjective it is an open
map, so $A(\interior \mathcal{D})$ is an open neighborhood of $\bar\lambda$; pick $\varepsilon$
with $\bar\lambda + \varepsilon\sigma \in A(\interior \mathcal{D})$ for all $2^n$ vectors
$\sigma$.
\end{proof}

\begin{remark}[Why a budget of two]\label{rem:budget}
The $\ell_1$ budget of $2$ in \cref{eq:vqml} is what makes the slack of
\cref{lem:bridge} uniform over instances. In the proof above,
$\|\bar\mu\|_1 \leq 1$ with equality precisely when every hyperedge is a
mono-edge; a unit budget ($\|s\|_1 \leq 1$) would then place $\bar\mu$ on the
boundary of the corresponding domain $\{\varphi \geq 0,\ \|\varphi\|_1 \leq
1\}$, and no $\varepsilon > 0$ could exist: each epoch would bring one item
and remove at most one unit, so the total mass $\sum_i Q_i(t)$ would be
nondecreasing. Under any rule with the idle clause the state $0$ would then
be transient: for $A = I_2$ and $\lambda = (1,1)$, the chain leaves $0$ at
the first arrival and is trapped in $\{(1,0), (0,1)\}$. (A rule that
activates zero-score mono-edges at $0$ can keep $0$ recurrent; the uniform
slack of \cref{lem:bridge} is lost regardless.) The budget of $2$ also makes
every active maximizer remove at least two units per epoch
(fact~(F3) of Appendix~\ref{app:access}), which drives the steering of
\cref{lem:access}. Compare~\cite[\S3.6.2]{NS19}, where two matchings per
arrival ($m = 2$) are invoked precisely when discussing systems that can be
stabilized without single-type matchings (mono-edges, in our terms).
\end{remark}

\begin{lemma}[Foster--Lyapunov drift for the virtual chain]\label{lem:foster}
Assume (iii), let $\varepsilon > 0$ be as in \cref{lem:bridge}, and set
$R = 1 + 2\max_k \|A_k\|_1$. The virtual chain $(Q(t))_{t\in\N}$ defined by
\cref{eq:vqml} satisfies
\begin{equation}\label{eq:drift}
\E\big[\|Q(t+1)\|_2^2 - \|Q(t)\|_2^2 \,\big|\, Q(t) = q\big]
\;\leq\; R^2 - 2\varepsilon \|q\|_1
\qquad \text{for every } q \in \Z^n ,
\end{equation}
which is $\leq -1$ outside the finite set $F = \{q : \|q\|_1 \leq M\}$,
$M := (R^2+1)/(2\varepsilon)$. Consequently:
(a) started at $Q(0) = 0$,
$\sup_{T \geq 1} \frac{1}{T}\sum_{t < T} \E\|Q(t)\|_1 \leq R^2/(2\varepsilon)$;
(b) $\E_q[\tau_F] \leq \|q\|_2^2$ for every $q \notin F$, and
$\E_q[\tau_F^+] \leq 1 + (M+R)^2$ for every $q \in F$, where
$\tau_F = \inf\{t \geq 0 : Q(t) \in F\}$ and
$\tau_F^+ = \inf\{t \geq 1 : Q(t) \in F\}$.
The drift bound \labelcref{eq:drift} holds for \emph{any} tie-breaking rule (any
measurable selection of a maximizer, including the canonical rule, whose idling
choice is itself a maximizer when $\max_k \langle q, A_k\rangle \leq 0$, since
every $s \geq 0$ then scores $\leq 0 = \langle q, A \cdot 0\rangle$).
\end{lemma}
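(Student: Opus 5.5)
We expand the square: with $\Delta = a(t) - A s(t)$ and $Q(t)=q$,
\[
\|q+\Delta\|_2^2 - \|q\|_2^2 = 2\langle q, a(t)\rangle - 2\langle q, A s(t)\rangle + \|\Delta\|_2^2 .
\]
The last term is bounded deterministically. Since $\|a(t)\|_1 = 1$ and $\|A s(t)\|_1 \leq 2\max_k\|A_k\|_1$, we get $\|\Delta\|_2 \leq \|\Delta\|_1 \leq R$, so $\|\Delta\|_2^2 \leq R^2$. Taking conditional expectations, $\E[\langle q, a(t)\rangle] = \langle q, \bar\lambda\rangle$, because arrivals are i.i.d.\ with law $\bar\lambda$ and independent of $q$. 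For the MaxWeight term, the selected $s(t)$ maximizes $\langle q, As\rangle$ over integer $s$ with $\|s\|_1\le 2$, and this value must be compared with the fractional $\varphi_\sigma$ of \cref{lem:bridge}.

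The key step is that the maximum of the linear function $s\mapsto\langle q, As\rangle$ over the polytope $\mathcal D=\{\varphi\ge0,\|\varphi\|_1\le2\}$ is attained at a vertex. The vertices of $\mathcal D$ are $0$ and $2e_k$, which are integer points with $\ell_1$-norm at most $2$, so they are feasible in \cref{eq:vqml}. Hence
\[
\langle q, As(t)\rangle \ge \langle q, A\varphi\rangle \quad\text{for every } \varphi\in\mathcal D,
\]
for any maximizer $s(t)$, whatever the tie-breaking rule. We choose $\sigma = \operatorname{sign}(q)$ (with an arbitrary sign where $q_i=0$). By \cref{lem:bridge},
\[
\langle q, A\varphi_\sigma\rangle = \langle q,\bar\lambda\rangle + \varepsilon\langle q,\sigma\rangle = \langle q,\bar\lambda\rangle+\varepsilon\|q\|_1 .
\]
The $\langle q,\bar\lambda\rangle$ terms cancel, which yields \eqref{eq:drift}. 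Since only the maximizing property is used, the bound holds for any selection of a maximizer. The canonical idle choice is a maximizer when all scores are $\le 0$, as the statement notes. Outside $F$ we have $2\varepsilon\|q\|_1 > R^2+1$, so the drift is $\le -1$ there.

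For (a), we sum \eqref{eq:drift} from $Q(0)=0$ using the tower property. The telescoping and nonnegativity of $\|Q(T)\|_2^2$ give $0\le TR^2 - 2\varepsilon\sum_{t<T}\E\|Q(t)\|_1$. Integrability of each $\E\|Q(t)\|_2^2$ is immediate because the increments are bounded.

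For (b), we use the standard Foster argument. Let $V=\|Q\|_2^2$ and $q\notin F$. Then $V(Q(t\wedge\tau_F)) + (t\wedge\tau_F)$ is a supermartingale, because the drift is $\le -1$ before $\tau_F$. Optional stopping at the bounded time $t\wedge\tau_F$ gives $\E_q[t\wedge\tau_F]\le V(q)$, and monotone convergence then gives $\E_q[\tau_F]\le\|q\|_2^2$.

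For $q\in F$, we condition on the first step. If $Q(1)\in F$ then $\tau_F^+=1$. Otherwise $\tau_F^+ = 1+\tau_F\circ\theta_1$, so
\[
\E_q[\tau_F^+] \le 1+\E_q\bigl[\|Q(1)\|_2^2\bigr].
\]
Finally $\|Q(1)\|_2\le\|Q(1)\|_1\le \|q\|_1+R\le M+R$.

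The only delicate point is the vertex argument comparing the integer argmax with the fractional $\varphi_\sigma$. It is exactly here that the budget-two domain $\mathcal D$ must have integer vertices that are all admissible actions, and $\mathcal D$ satisfies this. The remaining steps are routine.
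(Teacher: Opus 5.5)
Your proposal is correct and follows the paper's proof essentially step for step. It uses the same expansion of $\|q+\Delta\|_2^2$ with $\|\Delta\|_2 \le \|\Delta\|_1 \le R$, the same vertex argument over $\mathcal D$ with $\sigma = \operatorname{sign}(q)$ from \cref{lem:bridge}, telescoping from $Q(0)=0$ for (a), and for (b) the stopped supermartingale $\|Q(t\wedge\tau_F)\|_2^2 + (t\wedge\tau_F)$ together with first-step conditioning, where $\|Q(1)\|_2 \le \|Q(1)\|_1 \le M+R$.
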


\begin{proof}
$Q$ is a time-homogeneous Markov chain on a countable subset of $\Z^n$ (increments
take finitely many values; the tie-breaking rule is deterministic and state-only).
The arrival $a(t)$ is exogenous, i.i.d.\ with law $\bar\lambda$, independent of
$Q(t)$ (which is a function of past arrivals only), so
$\E[a(t) \mid Q(t) = q] = \bar\lambda$. Let $L(q) = \|q\|_2^2$. With
$\Delta = a(t) - A s(t)$
(bounded: $\|\Delta\|_1 \leq 1 + 2\max_k \|A_k\|_1 = R$),
\[
\begin{aligned}
\E[L(Q(t+1)) - L(Q(t)) \mid Q(t) = q]
&= 2 \big\langle q,\ \bar\lambda - A\,\E[s(t) \mid q] \big\rangle
+ \E[\|\Delta\|_2^2 \mid q] \\
&\leq 2 \langle q, \bar\lambda \rangle - 2\langle q, A s(q)\rangle + R^2 ,
\end{aligned}
\]
where $s(q)$ is the MaxWeight choice at $q$ (deterministic given $q$ up to
tie-breaking; any measurable selection works).
The feasible set of \cref{eq:vqml} contains the vertices $\{0, 2 e_k : k \in \Ed\}$
of $\mathcal{D} = \{\varphi \geq 0, \|\varphi\|_1 \leq 2\}$, and a linear objective attains its
maximum over $\mathcal{D}$ at a vertex, so
\[
\begin{aligned}
\langle q, A s(q) \rangle
&= \max_{s \in \N^m, \|s\|_1 \le 2} \langle q, As\rangle
\;\geq\; \max_{\varphi \in \mathcal{D}} \langle q, A\varphi\rangle
\;\geq\; \langle q, A \varphi_{\sigma(q)} \rangle \\
&= \langle q, \bar\lambda \rangle + \varepsilon \langle q, \sigma(q)\rangle
= \langle q, \bar\lambda \rangle + \varepsilon \|q\|_1 ,
\end{aligned}
\]
taking $\sigma(q) = \operatorname{sign}(q)$ (arbitrary signs on zero coordinates) and
$\varphi_{\sigma(q)}$ from \cref{lem:bridge}. This proves \cref{eq:drift}; $F$ is
finite as a set of lattice points in an $\ell_1$-ball.

(a) All expectations involved are finite ($\|Q(t)\|_1 \leq tR$ from $Q(0) = 0$ and
the increment bound), so telescoping \cref{eq:drift} from $Q(0) = 0$ is legitimate
and gives
$0 \leq \E\|Q(T)\|_2^2 \leq T R^2 - 2\varepsilon \sum_{t<T} \E\|Q(t)\|_1$;
rearrange. (Telescoping yields only this Ces\`aro bound, \emph{not}
$\sup_t \E\|Q(t)\|_1 < \infty$; the latter does hold, by a pathwise drift lemma
\`a la Hajek~\cite{hajek1982} applied to $\|q\|_2$, but is not needed anywhere below.)

(b) For $q \notin F$, the stopped process
$L(Q(t \wedge \tau_F)) + (t \wedge \tau_F)$ is a supermartingale under $\PP_q$ by
\cref{eq:drift}, so $\E_q[t \wedge \tau_F] \leq L(q)$ for every $t$, and monotone
convergence gives $\E_q[\tau_F] \leq L(q) = \|q\|_2^2$. For $q \in F$, one step
lands in $\{q' : \|q'\|_1 \leq M + R\}$ (increments have $\ell_1$-norm at most
$R$); conditioning on $Q(1)$ and using the previous bound together with
$\|q'\|_2 \leq \|q'\|_1$ yields $\E_q[\tau_F^+] \leq 1 + (M+R)^2$.
\end{proof}

\begin{lemma}[The origin is accessible from every state]\label{lem:access}
Assume every row of $A$ is nonzero. This holds under (iii), since
$A\mu = \lambda$ with $\mu > 0$ and $\lambda_i > 0$ forces row $i$ to have a
positive entry. Let $(Q(t))$ be the virtual chain of \cref{eq:vqml} under any
tie-breaking rule with the canonical \emph{idle clause}:
\begin{itemize}
\item $s(q) = 0$ whenever $\Psi(q) := \max_{k \in \Ed} \langle q, A_k \rangle
\leq 0$ (\emph{idle} states), and
\item $s(q)$ is an arbitrary maximizer of \cref{eq:vqml} otherwise
(\emph{active} states).
\end{itemize}
Write $V(q) = \sum_i \max(q_i, 0)$ for the positive mass of $q$, and set
$a_{\max} = \max_k \|A_k\|_1$ and $\bar\lambda_{\min} = \min_i \bar\lambda_i
> 0$. Then, for every $q \in \Z^n$,
\[
\PP_q\big( \tau_0 \leq \|q\|_1 + 2 a_{\max} V(q) \big)
\;\geq\; \bar\lambda_{\min}^{\,\|q\|_1 + 2 a_{\max} V(q)} \;>\; 0,
\qquad \tau_0 := \inf\{t \geq 0 : Q(t) = 0\} .
\]
In particular, $0$ is accessible from \emph{every} state of $\Z^n$ (a
fortiori from every state reachable from $0$) and the set of states
reachable from $0$ coincides with the communicating class of $0$.
\end{lemma}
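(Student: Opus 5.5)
The plan is a deterministic \emph{steering} argument. We exhibit, from any $q$, one explicit finite sequence of arrival classes. Along this sequence the chain is driven to $0$ within $T(q):=\|q\|_1+2a_{\max}V(q)$ epochs, whatever maximizer the tie-breaking rule selects at active states. Each prescribed arrival occurs with probability at least $\bar\lambda_{\min}$, independently across epochs. So the sequence, and hence $\{\tau_0\le T(q)\}$, has probability at least $\bar\lambda_{\min}^{T(q)}$. The last sentence of the statement then follows at once. Every state reachable from $0$ leads back to $0$, so the reachable set from $0$ is closed under ``leads to'' in both directions and equals the communicating class of $0$.

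Since the selected maximizer is arbitrary, the sequence is built adaptively: at each epoch we choose the arrival class as a function of the current state $q$. We use the potential $P(q)=\|q\|_1+2a_{\max}V(q)$ and aim to show that a suitable arrival makes $P$ drop by at least $1$ at every epoch until $q=0$. There are three cases.

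\emph{Case $q\le 0$, $q\ne 0$.} Since $A\ge 0$, we have $\Psi(q)\le 0$, so the state is idle. Feed a class $i$ with $q_i<0$. Then $q_i$ rises by $1$ and stays $\le 0$, $V$ is unchanged, and $\|q\|_1$ drops by $1$. This phase alone handles $V(q)=0$ in exactly $\|q\|_1$ steps.

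\emph{Case $V(q)>0$ and $q$ idle.} Some coordinate must be negative. Otherwise $q\ge 0$, $q\ne0$, and because every row of $A$ is nonzero (the hypothesis of the lemma), some $k$ has $\langle q,A_k\rangle>0$, contradicting idleness. Feed such a negative class. Then $P$ drops by $1$ exactly as in the first case, and the state after this step is whatever the dynamics produce, idle or active.

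\emph{Case active states.} This is the main step. We must show that the total positive mass $V$ decreases by at least $1$ while $\|q\|_1$ increases by at most $2a_{\max}-1$. That would give $\Delta P\le (2a_{\max}-1)-2a_{\max}=-1$.

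My first attempt at the active case is as follows. Feed an arrival at a class $j$ where $q_j<0$ if one exists, and otherwise at any class. Then use the structure of a maximizer $s$ with $\langle q,As\rangle>0$. Fact (F3) says an active maximizer has $\|s\|_1=2$, so it removes at least two units. Positivity of the score, together with maximality against the alternatives $s-e_k$ and $2e_k$, should force the removed multiset $As$ to intersect the positive coordinates of $q+a$ in total mass at least $1$ more than the arrival added to $V$. The growth of $\|q\|_1$ is crude: the removal has $\ell_1$-size at most $2a_{\max}$, and a strict decrease of $V$ forces at least one unit to cancel rather than overshoot. I expect this inequality to be the delicate point. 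In particular, maximizers may pick an edge with a zero- or negative-score companion, and the arrival may itself land on a positive coordinate when $q$ has no negative coordinate. If the per-step bound fails, my fallback is a slightly weaker potential argument: bound the total number of active steps by $V(q)$ plus the number of idle steps consumed, and verify that the aggregate count still stays below $T(q)$.
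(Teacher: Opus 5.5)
\textbf{There is a genuine gap in the active case.} Your idle cases, the potential $P(q)=\|q\|_1+2a_{\max}V(q)$, and the final deduction about the communicating class are sound. The gap is exactly the active case you flag, and the feeding rule you propose there is wrong, not merely unproven. Choosing the arrival from the \emph{pre-decision} state $q$ fails when $q\ge 0$ and the maximizer overshoots. Take $n=3$ with columns $A_1=e_1+e_2$ and $A_2=e_3$ (every row is nonzero), and $q=e_1$. The unique maximizer doubles the edge $\{1,2\}$, so $As=(2,2,0)$ and $q-As=(-1,-2,0)$. Since $q$ has no negative coordinate, your rule allows any class. Feeding class $3$ gives $(-1,-2,1)$: $V$ stays at $1$ and $P$ rises from $5$ to $8$. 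So the heuristic that maximality ``should force'' $As$ to remove enough positive mass of $q+a$ is false for an arbitrary arrival. Your fallback, which bounds the number of active steps by $V(q)$, presupposes the very per-step decrease of $V$ that fails here.

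\textbf{The missing idea is to steer by the \emph{post-decision} residual $r=q-A\,s(q)$.} This residual is known before the arrival, because $s$ depends on $q$ alone (decide-then-arrive).
\begin{itemize}
\item If some $r_j\le -1$, feed $j$. Then $V(r+e_j)=V(r)$. Positivity of the score alone gives a class $i^\ast$ with $q_{i^\ast}\ge 1$ and $(As)_{i^\ast}\ge 1$, hence $V(r)\le V(q)-1$. This subsumes your case $q_j<0$, since $r_j\le q_j$.
\item If $r\ge 0$, then $q=r+As\ge 0$. For any arrival, $V(r+e_i)=\|q\|_1-\|As\|_1+1\le V(q)-1$ by (F3).
\end{itemize}
Writing $\eta(q)=\sum_i\max(-q_i,0)$, any arrival satisfies $\eta(r+e_j)\le \eta(q)+\|As\|_1\le \eta(q)+2a_{\max}$. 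Since $P=\eta+(2a_{\max}+1)V$, this yields $\Delta P\le -1$ at every active step, and your potential argument closes. It is then just a repackaging of the paper's count: at most $V(q)$ active steps and at most $\eta(q)+2a_{\max}V(q)$ idle steps.
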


\begin{proof}[Proof idea; the full steering construction is in Appendix~\ref{app:access}]
For each starting state $q$ we exhibit a deterministic arrival word of length at most
$\|q\|_1 + 2 a_{\max} V(q)$ that drives the chain to $0$; since every prescribed arrival
class has probability at least $\bar\lambda_{\min}$, this yields the bound. Two facts drive
the construction. First, the idle clause makes the nonpositive orthant a safe parking area:
at an idle state one may feed a negative coordinate, storing the arrival without triggering
an activation and lowering the negative mass by one at unchanged positive mass. Second, at
an active state every maximizer removes $\|A s\|_1 \geq 2$ units, so one steered arrival
lowers the positive mass $V(q)$ by at least one. The positive mass therefore decreases
monotonically to $0$, after which the bounded negative-mass excursion is drained.
\end{proof}

\begin{proposition}[Positive recurrence on the class of the
origin]\label{prop:virtual-pr}
Assume (iii), and let the tie-breaking rule be deterministic and state-only,
satisfying the idle clause of \cref{lem:access} (in particular, the canonical
rule of \cref{eq:vqml} qualifies). Then the virtual chain started at $Q(0) =
0$ is positive recurrent on the communicating class of $0$, which coincides
with the set of states reachable from $0$; in particular $m_0 :=
\E_0[\tau_0^+] < \infty$, where $\tau_0^+ = \inf\{t \geq 1 : Q(t) = 0\}$.
\end{proposition}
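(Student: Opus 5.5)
The plan is to combine the Foster--Lyapunov bounds of \cref{lem:foster}(b) with the accessibility of \cref{lem:access}, via a standard ``finite set plus uniform hitting probability'' argument.

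First, the structural claims. By \cref{lem:access}, $0$ is accessible from every state of $\Z^n$. Hence every state $q$ reachable from $0$ also leads back to $0$. So the set $\mathcal{C}$ of states reachable from $0$ is exactly the communicating class of $0$, and it is closed. It therefore suffices to show $m_0=\E_0[\tau_0^+]<\infty$. Positive recurrence of the whole class then follows, since positive recurrence is a class property for countable chains.

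Second, the return-time bound. Let $F$ be the finite set of \cref{lem:foster}. Set $N:=\max_{q\in F}(\|q\|_1+2a_{\max}V(q))$ and $p:=\bar\lambda_{\min}^{N}>0$. By \cref{lem:access}, from every $q\in F$ the chain hits $0$ within $N$ steps with probability at least $p$. Starting from $0$, I decompose time into attempts. Each attempt consists of waiting until the chain is in $F$ (using $\tau_F^+$), then running $N$ further steps.

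The expected length of one attempt is bounded uniformly, by the following chain of bounds:
\begin{itemize}
\item From any $q\in F$, \cref{lem:foster}(b) gives $\E_q[\tau_F^+]\le 1+(M+R)^2$.
\item The $N$ steps of an attempt may exit $F$, but they end at a point $q'$ with $\|q'\|_1\le M+NR$.
\item From such a $q'$, $\E_{q'}[\tau_F]\le \|q'\|_2^2\le (M+NR)^2$.
\end{itemize}
So each attempt has expected duration at most $K:=N+1+(M+NR)^2$, uniformly in the starting point in $F$.

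By the strong Markov property, each attempt succeeds (hits $0$ during its $N$-step window) with probability at least $p$, independently of the past. The number of attempts is therefore dominated by a geometric variable with mean $1/p$. A Wald-type summation then gives $\E_0[\tau_0^+]\le 1+(1+(M+R)^2)+K/p<\infty$. Here $0\in F$, and the initial step handles the requirement $t\ge1$.

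The main obstacle is making this renewal-style bookkeeping rigorous. One must define the stopping times $T_j$ (successive entries to $F$ after each $N$-window) and verify two facts via the strong Markov property:
\begin{itemize}
\item the bound $\E[T_{j+1}-T_j\mid\mathcal F_{T_j}]\le K$ holds;
\item the failure events satisfy $\PP(\text{fail}_1\cap\dots\cap\text{fail}_j)\le(1-p)^j$.
\end{itemize}
Then $\E[\tau_0^+]\le\sum_j \E[(T_{j+1}-T_j)\mathbf 1\{\text{fail}_1\cdots\text{fail}_j\}]\le K\sum_j(1-p)^j$.

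Alternatively, I would cite the standard result that a Foster drift condition toward a finite set gives positive recurrence of any irreducible class intersecting it. This applies to the chain restricted to the closed irreducible class $\mathcal{C}$, since $\mathcal{C}\cap F\ni0$ is finite and $F$-return times are integrable by \cref{lem:foster}(b).
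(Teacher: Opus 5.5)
Your argument is correct. Its structural half matches the paper's proof: \cref{lem:access} makes the reachable set the closed communicating class of $0$, and positive recurrence is a class property. The difference is in how $m_0<\infty$ is obtained.

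The paper does not build an attempt decomposition. It restricts the chain to the closed irreducible class $\mathcal{R}$ and observes that the drift bound \labelcref{eq:drift} holds there, with drift $\leq -1$ outside the finite set $F\cap\mathcal{R}$ and finite one-step increments on it. It then invokes Foster's criterion for irreducible chains~\cite{B99}, which is exactly the fallback in your last paragraph.

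Your main line instead uses \cref{lem:access} \emph{quantitatively}, through the uniform $N$-step hitting probability $p=\bar\lambda_{\min}^{N}$ from $F$. You combine this with the explicit hitting-time bounds of \cref{lem:foster}(b) and run the geometric-trials argument by hand. This is the alternative renewal route the paper sketches in \cref{rem:access-scope}(c) and Appendix~\ref{app:altroute}, but keeps off its main line.

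What each route buys:
\begin{itemize}
\item Your route is self-contained; it is essentially the proof of Foster's theorem specialized to this chain. It also yields an explicit, if astronomically large, bound on $m_0$ of order $(M+NR)^2\,\bar\lambda_{\min}^{-N}$.
\item The paper's route is shorter and needs accessibility only qualitatively. It would survive without the linear length bound on the steering words.
\end{itemize}

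When you write out the bookkeeping, make one point explicit. The failure of the $i$-th attempt is decided within the window $[T_{i-1},T_{i-1}+N]\subseteq[T_{i-1},T_i]$, so $\{\text{fail}_1\cap\dots\cap\text{fail}_j\}\in\mathcal{F}_{T_j}$. That measurability is what licenses bounding $\E[(T_{j+1}-T_j)\mathbf{1}\{\text{fail}_1\cap\dots\cap\text{fail}_j\}]$ by $K(1-p)^j$.
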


\begin{proof}
Let $\mathcal{R} \subseteq \Z^n$ be the set of states reachable from $0$;
it is closed by definition. By \cref{lem:access}, every $q \in \mathcal{R}$
leads back to $0$ with positive probability, so every state of
$\mathcal{R}$ communicates with $0$: $\mathcal{R}$ is a single closed
communicating class containing $0$, and the chain restricted to
$\mathcal{R}$ is irreducible. The drift bound \labelcref{eq:drift} holds at every
$q \in \mathcal{R}$, with drift $\leq -1$ outside the finite set $F \cap
\mathcal{R}$ and finite expected one-step $L$-increment on it (finitely many
bounded transitions per state), so Foster's criterion for irreducible chains~\cite{B99}
yields positive recurrence of $\mathcal{R}$. In an irreducible positive
recurrent chain every state has finite mean return time; in particular
$m_0 = \E_0[\tau_0^+] < \infty$.
\end{proof}

\begin{remark}[Tie-breaking matters, but only through the idle clause]
\label{rem:tiebreak}
Positive recurrence on the communicating class of $0$ does require a
condition on the selection rule: for $n = m = 2$, $A = I_2$ (two mono-edges),
$\lambda = (1,1)$, for which (iii) is satisfied, there is a legal deterministic
rule (among maximizers prefer $2e_1$, then $e_1$, then $2e_2$, then $0$, then
$e_2$, then $e_1 + e_2$; on the states actually visited, this only ever
amounts to preferring $2e_1$, then $2e_2$, to idling at zero-score ties)
under which the reachable set from $0$
has nine states, $0$ is visited exactly once, and the chain is absorbed into a
seven-state class (\cref{fig:trap}). This is verified by exhaustive search over the reachable set,
reproduced in the companion notebook~\cite{candynb}; the
drift bound \labelcref{eq:drift} is unaffected, only the class structure is, and
the absorbing class, being finite, is itself positive recurrent.
\cref{lem:access} shows the \emph{idle clause} is the only feature of the rule
that matters: the adversarial rule above activates zero-score hyperedges at
ties, violating precisely that clause. Nazari and Stolyar make idling
available by convention (the empty matching $\langle\emptyset\rangle$ is an
element of their matching set~\cite[\S3.1]{NS19}) but specify no selection
rule among maximizers; the example above shows the omission is not innocuous,
since the selection at zero-score ties decides the class structure of the
chain.
\end{remark}

\begin{figure}[t]
\centering
\begin{tikzpicture}[x=2.2cm, y=1.7cm, thick,
  st/.style={draw, rounded corners=1.5pt, fill=black!10, inner sep=2.5pt,
             font=\footnotesize},
  tr/.style={st, fill=white, densely dashed},
  act/.style={font=\scriptsize, text=black!60},
  a1/.style={-stealth},
  a2/.style={-stealth, densely dashed}]
\begin{scope}[on background layer]
  \draw[black!20, -stealth] (-2.55, 0) -- (0.5, 0)
    node[right, font=\scriptsize, text=black!50] {$q_1$};
  \draw[black!20, -stealth] (0, -2.5) -- (0, 1.4)
    node[above, font=\scriptsize, text=black!50] {$q_2$};
\end{scope}
\node[tr, label={[act]above right:$2e_1$}] (p00)  at (0,0)   {$(0,0)$};
\node[st, label={[act]above:$2e_2$}]       (m10)  at (-1,0)  {$(-1,0)$};
\node[st, label={[act]left:$2e_2$}]        (m20)  at (-2,0)  {$(-2,0)$};
\node[tr, label={[act]above:$2e_2$}]       (m21)  at (-2,1)  {$(-2,1)$};
\node[st, label={[act]right:$2e_1$}]       (p0m1) at (0,-1)  {$(0,-1)$};
\node[st]                                  (m1m1) at (-1,-1) {$(-1,-1)$};
\node[st]                                  (m2m1) at (-2,-1) {$(-2,-1)$};
\node[st, label={[act]right:$2e_1$}]       (p0m2) at (0,-2)  {$(0,-2)$};
\node[st]                                  (m1m2) at (-1,-2) {$(-1,-2)$};
\draw[a1] (p00) -- (m10);
\draw[a2] (p00) -- (m21);
\draw[a1] (m10) -- (p0m2);
\draw[a2] (m10) to[bend left=15] (m1m1);
\draw[a1] (m21) -- (m1m1);
\draw[a2] (m21) -- (m20);
\draw[a1] (p0m2) to[bend left=15] (m1m2);
\draw[a2] (p0m2) -- (m2m1);
\draw[a1] (m1m1) to[bend left=15] (p0m1);
\draw[a2] (m1m1) to[bend left=15] (m10);
\draw[a1] (m20) -- (m1m2);
\draw[a2] (m20) to[bend left=15] (m2m1);
\draw[a1] (m1m2) to[bend left=15] (p0m2);
\draw[a2] (m1m2) -- (m1m1);
\draw[a1] (m2m1) -- (m1m1);
\draw[a2] (m2m1) to[bend left=15] (m20);
\draw[a1] (p0m1) to[bend left=15] (m1m1);
\draw[a2] (p0m1) -- (m20);
\end{tikzpicture}
\caption{The adversarial chain of \cref{rem:tiebreak} ($A = I_2$,
$\lambda = (1, 1)$), each state drawn at its coordinates
$(q_1, q_2)$. Solid arrows: class-$1$ arrival; dashed arrows: class-$2$
arrival (probability $\tfrac12$ each). Active states carry their activated
multiset; unlabeled states are idle. Every labeled activation has zero
score, except at $(-2,1)$ where $2e_2$ is the unique maximizer: whenever
the chain touches an axis, the rule activates a zero-score doubled
mono-edge, which the idle clause forbids, and overshoots into the
nonpositive orthant (except for $(0,0) \to (-2,1)$, resorbed at the next
step). The origin (visited once) and $(-2,1)$, hollow, are
transient; the seven filled states form the absorbing class.}
\label{fig:trap}
\end{figure}

\begin{remark}[What the accessibility proof uses, and what it does not]
\label{rem:access-scope}
(a) \emph{Tie-breaking.} The only feature of the canonical rule that is
used is the idle clause: $s(q) = 0$ as soon as $\max_k \langle q, A_k
\rangle \leq 0$. At active states, \emph{any} maximizer may be selected,
e.g.\ lexicographic, of any norm, even randomized, and \cref{lem:access}
persists: for a \emph{nonanticipating} randomized maximizer (one independent
of the current arrival, as the decide-then-arrive convention of \cref{eq:vqml} guarantees),
Step~1 of the proof applies to \emph{every} realized maximizer, so each
steered step succeeds with conditional probability $\geq \bar\lambda_{\min}$
given the past, and the bound of \cref{lem:access} holds verbatim. Intuitively, the idle
clause makes negative coordinates a safe parking area: the controller can
always store arrivals there without triggering activations, while any
activation, having positive score, must burn at least one unit of positive
mass, and its overshoot only creates more (bounded) parking space.
(b) \emph{Hypotheses.} Condition (iii) enters only through the innocuous
consequence that every row of $A$ is nonzero (every class belongs to some
hyperedge). Accessibility of $0$ needs neither surjectivity nor $\lambda
\in \interior\cone(A)$, as those are needed for the negative drift of
\cref{lem:foster}, not for the class structure.
(c) \emph{Quantitative form.} The steering word has length linear in
$\|q\|_1$, with explicit constant $1 + 2 a_{\max}$; hence $\inf_{q \in F}
\PP_q(\tau_0 \leq T_F) > 0$ for the fixed horizon $T_F = (1 + 2 a_{\max})
\max_{q \in F} \|q\|_1$, which combines with the uniformly bounded return
times to $F$ in the standard renewal argument (an alternative route to
positive recurrence, detailed in Appendix~\ref{app:altroute}, not used by the main line, which goes
through \cref{prop:virtual-pr}).
(d) \emph{Numerical corroboration.} The steering construction of
\cref{lem:access} and the adversarial $I_2$ trap of \cref{rem:tiebreak} were
cross-checked numerically over a range of incidence matrices and starting
states in the companion notebook~\cite{candynb}.
\end{remark}

\subsubsection{The full physical state}
The full chain is specified in two steps: the state space first, then the
epoch pipeline that drives it.
Recall $a_{\max} = \max_k \|A_k\|_1$ from \cref{lem:access} (so $R = 1 + 2a_{\max}$),
and write $q^+ = \max(q, 0)$,
$q^- = \max(-q, 0)$ componentwise, so that $q = q^+ - q^-$ and $\|q\|_1 =
\sum_i (q_i^+ + q_i^-)$. A \emph{backlog entry} is a pair $b = (k, c)$ with
$k \in \Ed$ and $c \in \N^n$, $c \leq A_k$ componentwise, $c \neq A_k$: a
virtually activated, not yet completed matching of type $k$, whose \emph{fill
status} $c$ records how many physical items of each class have been assigned
to it; its \emph{residual} is $r(b) = A_k - c \geq 0$, $r(b) \neq 0$. Entries
live in the finite alphabet
\[
\Sigma_A = \bigl\{(k, c) : k \in \Ed,\ c \in \N^n,\ c \leq A_k,\ c \neq A_k
\bigr\},
\quad
|\Sigma_A| = \sum_{k \in \Ed} \Bigl( \prod_{i \in \V} (A_{i,k} + 1) - 1 \Bigr)
< \infty .
\]
The \emph{full descriptor} of the \vqml-controlled system at epoch $t$ is
\[
S(t) = \bigl( X(t),\, Q(t),\, B(t) \bigr) \in \N^n \times \Z^n \times
\Sigma_A^{\,\ast},
\]
where $X(t)$ counts \emph{all} physical items present (assigned to a backlog
entry or not), $Q(t)$ is the virtual queue of \cref{eq:vqml}, and $B(t) =
(b_1(t), \dots, b_{|B(t)|}(t))$ is the backlog, an \emph{ordered} list, oldest
entry first (FIFO). \emph{Countability in one line:} finite words over a
finite alphabet form a countable set, so $\N^n \times \Z^n \times
\Sigma_A^{\,\ast}$ is a product of three countable sets, hence countable. All
components are pre-arrival snapshots, consistent with \cref{eq:vqml} and with
the strict sums $u < t$ of the balance equation~\labelcref{eq:balance}; the system
starts empty,
\[
S(0) = \emptystate := (0,\, 0,\, \epsilon), \qquad \epsilon = \text{empty
list}.
\]
From $(X, B)$ we derive the \emph{assigned} counts $W = \sum_j c_j$, the
\emph{unassigned} counts $U = X - W \in \N^n$, and the \emph{deficit}
$D = \sum_j r(b_j) = \sum_j (A_{k_j} - c_j)$, all sums running over the
entries $b_j = (k_j, c_j)$ of $B$. We restrict the state space to
$\{(x, q, \beta) : W(\beta) \leq x\}$ (assigned items are physically
present), a countable set that contains $\emptystate$ and is invariant
under the epoch pipeline below, so that $U \geq 0$ and its assignment step
(E5) is well defined.

\subsubsection{The epoch pipeline: decide, arrive, assign, complete}
One epoch follows the architecture \vqml{} inherits from EGPD~\cite{NS19}: a
decision taken from the virtual queue alone, followed by the physical
bookkeeping of arrivals, assignments, and completions. Given $S(t)$ and the
arrival $a(t) = e_i$, epoch $t$ executes, \emph{in this order}:
\begin{enumerate}
\item[(E1)] \emph{Decision.} $s(t) = s(Q(t))$ by \cref{eq:vqml} with the
canonical rule, computed from $Q(t)$ alone, \emph{before} the arrival lands
(the decide-then-arrive timing convention already fixed by \cref{eq:vqml}; see
\cref{rem:provenance}).
\item[(E2)] \emph{Virtual update.} $Q(t+1) = Q(t) + a(t) - A\,s(t)$.
\item[(E3)] \emph{Append.} For each $k$, append $s_k(t)$ fresh entries
$(k, 0)$ at the tail of the list, in increasing order of $k$ (at most two
entries in total).
\item[(E4)] \emph{Arrival.} The arriving class-$i$ item joins the pool of
unassigned items: $X$ becomes $X + e_i$, hence $U$ becomes $U + e_i$
($W$ unchanged).
\item[(E5)] \emph{FIFO assignment.} Scan the list from head (oldest) to tail;
at each entry $(k, c)$, assign $\min(U_j,\ A_{j,k} - c_j)$ currently unassigned
class-$j$ items to the entry, for each class $j$: $U_j$ decreases and $c_j$
increases by that amount. (Items of one class are exchangeable, so the
resulting state does not depend on which items are assigned; the scanning
order over classes within an entry is immaterial.)
\item[(E6)] \emph{Completion.} Remove every entry whose fill status has
reached $c = A_k$; each removal is a \emph{physical activation} of $k$: the
entry's $\|A_k\|_1$ assigned items leave the system, so $X$ decreases by $A_k$
for each removed entry of type $k$. Removals return no items to the unassigned
pool, so a single scan suffices (no cascades). The resulting triple is
$S(t+1)$.
\end{enumerate}
The map $(S(t), a(t)) \mapsto S(t+1)$ is deterministic and the arrivals are
i.i.d., so $(S(t))_{t \in \N}$ is a time-homogeneous Markov chain on a
countable state space. Moreover, by (E1)--(E2) the $Q$-component is
\emph{autonomous}: it evolves as a function of itself and the arrival only,
and coincides \emph{pathwise} (same arrival sequence) with the virtual chain
of \cref{lem:foster,lem:access}. Steps (E3)--(E6) are Nazari--Stolyar's
completion mechanism \cite[Sec.~3.3--3.4]{NS19}, written at the level of the
full state; \cref{rem:provenance} details the exact relation.

\begin{lemma}[Pathwise structure of the full state]\label{lem:pathwise}
From the empty start, every trajectory of $(S(t))$ satisfies, for all
$t \in \N$:
\begin{enumerate}
\item[(a)] \emph{(conservation)} $Q(t) = U(t) - D(t)$;
\item[(b)] \emph{(complementarity)} $U_i(t)\, D_i(t) = 0$ for every $i$;
consequently $U(t) = Q^+(t)$ and $D(t) = Q^-(t)$;
\item[(c)] \emph{(backlog bound)} $|B(t)| \leq \sum_i Q_i^-(t)$;
\item[(d)] \emph{(queue bound)} $\sum_i X_i(t) \leq \sum_i Q_i^+(t) +
(a_{\max} - 1) \sum_i Q_i^-(t)$; in particular $\sum_i X_i(t) \leq
\sum_i Q_i^+(t) + a_{\max} \sum_i Q_i^-(t)$.
\end{enumerate}
The proof uses only (E2)--(E6) and never the decision rule (E1): all four
statements hold for an arbitrary decision sequence $(s(t))_t$, with $Q$
defined from that sequence by (E2). In particular under either
within-epoch timing convention (decide-then-arrive, as in \cref{eq:vqml}, or
arrive-then-decide), provided the epoch pipeline
(E3)$\to$(E4)$\to$(E5)$\to$(E6) is preserved and only the position of the
decision moves (the single scan must follow both the appends and the
arrival).
\end{lemma}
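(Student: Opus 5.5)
We argue by induction over epochs from the empty start, tracking the invariants (a) and (b) jointly and deriving (c) and (d) from them at each time. At $t=0$ everything is zero and the backlog is empty, so all four statements hold. For the inductive step, we follow one epoch through (E2)--(E6) and record how $Q$, $U$ and $D$ change at each sub-step, using only the decision vector $s(t)$ as an arbitrary input.

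\emph{Conservation (a).} (E3) appends $s_k(t)$ entries of residual $A_k$, so $D$ increases by $A s(t)$, while $Q$ decreases by $A s(t)$ in (E2). (E4) increases $U$ by $a(t)=e_i$, matching the $+a(t)$ in (E2). Each unit assigned in (E5) lowers $U_j$ by one and lowers $D_j$ by one, since the entry's residual drops. Removal in (E6) only deletes entries with zero residual and assigned items, so $U$ and $D$ are untouched. Summing these, $U-D$ moves exactly as $Q$ does.

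\emph{Complementarity (b).} This is the key point. After the FIFO scan (E5), suppose $U_j>0$ and $D_j>0$ at the end of the epoch. Then some entry has positive class-$j$ residual. When the scan visited that entry, it assigned $\min(U_j,\text{residual})$. Since $U_j$ only decreases during the scan and is still positive at the end, $U_j$ was positive at that visit, so the residual was fully filled. Residuals never increase during the scan, which gives a contradiction. Hence $U_jD_j=0$ after every epoch. Combined with (a) and $U,D\ge 0$, this yields $U=Q^+$ and $D=Q^-$ (a difference of two nonnegative vectors with disjoint supports). Note that (b) holds regardless of the state before the epoch, because the scan happens after both the appends and the arrival. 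This is exactly why the lemma requires that ordering.

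\emph{Bounds (c) and (d).} Every backlog entry surviving (E6) has $r(b)\ne 0$, hence contributes at least $1$ to $\|D\|_1$. So $|B|\le \|D\|_1=\sum_i Q_i^-$, which is (c). For (d), write $\sum_i X_i=\|U\|_1+\|W\|_1$. For each entry, $\|c\|_1=\|A_k\|_1-\|r(b)\|_1\le a_{\max}-\|r(b)\|_1$. Summing over entries gives
\[
\|W\|_1\le a_{\max}|B|-\|D\|_1\le (a_{\max}-1)\|D\|_1 ,
\]
using (c). Substituting $U=Q^+$ and $D=Q^-$ gives (d).

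\emph{Main obstacle.} The only delicate step is the complementarity argument. We have to check carefully that (E6) removals cannot create a new positive residual or return items to the pool, and that the single head-to-tail scan really saturates every class-$j$ residual whenever unassigned class-$j$ items remain. Nothing in the argument touches (E1), so the decision-rule independence noted in the statement follows automatically.
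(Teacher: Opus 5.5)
Your proof is correct and follows the paper's argument essentially step by step: the same bookkeeping of $(U,D)$ through (E3)--(E6) for (a), the same FIFO-scan contradiction for (b), and the same residual counting for (c) and (d). Your bound $\|W\|_1 \le a_{\max}|B| - \|D\|_1$ is only a cosmetic variant of the paper's per-entry bound $\|c\|_1 \le a_{\max}-1$. One wording fix in (b): what forces the entry's residual to be filled is that the class-$j$ pool is still positive \emph{after} the assignment at that entry, which follows from monotonicity and positivity at the end of the scan. If instead $\min(\text{pool},\text{residual})<\text{residual}$, the assignment would have emptied the pool, so positivity of the pool before the visit is not by itself enough.
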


\begin{proof}[Proof idea; details in Appendix~\ref{app:fullstate}]
Identities (a) and (b) follow by induction over the epoch pipeline (E2)--(E6): each step's
effect on $(U, D)$ is immediate, and the single FIFO scan (E5) forces $U_i D_i = 0$, whence
$U = Q^+$ and $D = Q^-$ by uniqueness of the Jordan decomposition. Bounds (c) and (d) then
follow by summing residuals over the backlog.
\end{proof}

\begin{remark}[\vqml{} lies in the admissible policy class]
\label{rem:admissible}
Physical activations occur exactly at completions (E6), hence at arrival
epochs only, and remove items that are physically present and assigned: the
``queues suffice'' requirement holds by construction, and the balance
equation~\labelcref{eq:balance} holds in the form $X(t) = N(t) - A\,C(t) \geq 0$, the
physical activation counts being the completion counts $C_k(t) :=$ number of
completed type-$k$ matchings at epochs $u < t$. At most \emph{three}
matchings complete per epoch: by \cref{lem:pathwise}(b), for every class $j$,
either $U_j(t) = 0$ or no entry of $B(t)$ has a class-$j$ residual; moreover
every entry of $B(t)$ is incomplete (the $\Sigma_A$ invariant maintained by
(E6)) and the pool gains no item within the epoch besides the arrival ((E4)
is the only addition; (E6) returns none), so the entries of $B(t)$ can absorb
only the single arriving item during epoch $t$ (at most one of them
completes) while at most the two entries appended in (E3) complete
besides. Together with the countable internal state $(Q, B)$
and deterministic decisions at arrival epochs, \vqml{} satisfies conditions
(a) and (b) of the policy class of \cref{sec:model}.
\end{remark}

\begin{corollary}[Regeneration identity]\label{cor:regen}
Pathwise from the empty start: for every $t$,
\[
Q(t) = 0 \iff S(t) = \emptystate .
\]
In particular, for the chain started at $S(0) = \emptystate$, the return times
$\tau_\emptystate = \inf\{t \geq 1 : S(t) = \emptystate\}$ and $\tau_0^+ =
\inf\{t \geq 1 : Q(t) = 0\}$ coincide almost surely, as do all successive
return times.
\end{corollary}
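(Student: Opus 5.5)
The plan is to derive the corollary directly from \cref{lem:pathwise}, using only its parts (b) and (d) together with the definitions of $U$, $W$, $D$. There are two directions, and then the statement about return times follows from pathwise equality.

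\emph{($\Leftarrow$).} If $S(t)=\emptystate=(0,0,\epsilon)$, then $Q(t)=0$ by reading off the second component, so nothing is needed here.

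\emph{($\Rightarrow$).} Suppose $Q(t)=0$, so $Q^+(t)=Q^-(t)=0$.
\begin{itemize}
\item By \cref{lem:pathwise}(d), $\sum_i X_i(t)\le 0$, and since $X(t)\in\N^n$ we get $X(t)=0$.
\item By \cref{lem:pathwise}(c), $|B(t)|\le\sum_i Q_i^-(t)=0$, so the backlog is the empty list $\epsilon$.
\end{itemize}
Hence $S(t)=(0,0,\epsilon)=\emptystate$. As a consistency check, (b) gives $D(t)=Q^-(t)=0$, which directly says every backlog entry has zero residual. This is impossible for an entry in $\Sigma_A$, whose residual is nonzero by definition, so $B(t)=\epsilon$ also follows this way without invoking (c). I will mention this to make clear that the alphabet restriction $c\neq A_k$, maintained by (E6), is what is really used.

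\emph{Return times.} Both chains are driven by the same arrival sequence: the $Q$-component of $S$ coincides pathwise with the virtual chain, as noted after the epoch pipeline. So the equivalence holds for every $t$ on every trajectory from the empty start. The sets $\{t\ge1: S(t)=\emptystate\}$ and $\{t\ge1: Q(t)=0\}$ are therefore identical, so their infima agree, and so do all successive elements. These are the successive return times, and the equality holds surely, hence almost surely.

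The only potential obstacle is making sure that \cref{lem:pathwise} applies at every epoch, including ones visited after earlier returns. It does, because the lemma is stated for all $t$ along every trajectory from the empty start, and it holds for an arbitrary decision sequence. I will therefore not need any renewal argument at this point.
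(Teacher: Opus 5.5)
Your proof is correct and is the paper's argument: $Q(t)=0$ gives $Q^{\pm}(t)=0$, so \cref{lem:pathwise}(c) empties the backlog and (d) forces $X(t)=0$; the converse and the equality of all return times then follow pathwise from the empty start. Your alternative via (b) and the invariant $c\neq A_k$ is simply the proof of (c) written out. Note that your opening sentence says you use only (b) and (d), while your main bullet actually invokes (c).
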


\begin{proof}
If $Q(t) = 0$ then $\sum_i Q_i^-(t) = 0$, so $B(t)$ is empty by
\cref{lem:pathwise}(c), and $\sum_i X_i(t) \leq 0 + (a_{\max} - 1) \cdot 0 =
0$ by \cref{lem:pathwise}(d), whence $X(t) = 0$: $S(t) = \emptystate$. The
converse is the definition of $\emptystate$. The equivalence holds at every
$t$ along every trajectory from the empty start, so the hitting times of
$\{S = \emptystate\}$ and $\{Q = 0\}$ coincide pathwise.
\end{proof}

\begin{lemma}[Transfer to the physical system]\label{lem:transfer}
Assume (iii) and run \vqml{} with the canonical rule of \cref{eq:vqml} (or any
\emph{deterministic, state-only} tie-breaking rule satisfying the idle clause of
\cref{lem:access}).\footnote{Nonanticipating randomized state-only rules can
also be handled; the extension is not needed here, so we only record the
route: the pair (virtual state, randomization draw) is Markov and the
$Q$-marginal is Markov with the averaged kernel; the drift bound
\cref{eq:drift} holds for any measurable selection and \cref{lem:access}
persists (\cref{rem:access-scope}(a)), so \cref{prop:virtual-pr} applies to
the averaged chain, giving $\pi$ and $m_0$; \cref{cor:regen} is pathwise for
arbitrary decision sequences; the cycles of the augmented chain between the
$Q$-visits to $0$ are i.i.d.\ by the strong Markov property, and the Kac
step then yields part (b) with $\bar\mu_k = \E_\pi[\,\E[s_k \mid Q =
\cdot\,]\,]$.} Then:
\begin{enumerate}
\item[(a)] \emph{(stability)} The embedded chain $(S(t))$ started at
$\emptystate$ is irreducible on its reachable set, which is exactly the
communicating class of $\emptystate$, and positive recurrent there; the
continuous-time chain is positive recurrent on the same class. In particular
\vqml{} stabilizes $(G, \lambda)$, and the irreducibility required of the
\cite[Appendix~A]{CMV25} policy model is discharged (proven, not assumed).
\item[(b)] \emph{(rate identity)} Let $M_k(t) = \sum_{u < t} s_k(u)$ count
\emph{virtual} activations and $C_k(t)$ completed (physical) matchings of
type $k$ at epochs $u < t$, so that, for \vqml{}, the balance
equation~\labelcref{eq:balance} reads $X(t) = N(t) - A\,C(t)$, the role of its physical
activation counts being played by $C$ (\cref{rem:admissible}); the symbol
$M$ is reused here for the virtual counts. Then, almost surely, for
every $k$,
\[
\lim_{t \to \infty} \frac{C_k(t)}{t}
= \lim_{t \to \infty} \frac{M_k(t)}{t}
= \bar\mu_k := \E_\pi[s_k(\cdot)] ,
\qquad\text{and}\qquad
A \bar\mu = \bar\lambda ,
\]
where $\pi$ is the stationary law of the virtual chain on the class of $0$:
the physical matching rates coincide with the virtual activation rates.
Per unit of continuous time, type-$k$ matchings complete at rate $\Lambda
\bar\mu_k$, with $A (\Lambda \bar\mu) = \lambda$: the physical system matches
all arriving items at full rate.
\end{enumerate}
\end{lemma}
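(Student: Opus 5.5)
The plan is to lift everything from the autonomous virtual chain to the full chain through the regeneration identity of \cref{cor:regen}, and then to read off rates by renewal--reward along the common regeneration cycles.

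For (a), I first show that $\emptystate$ is accessible from every state reachable from $\emptystate$. Let $S(t_0)$ be reachable. Then $Q(t_0)$ is reachable from $0$ in the virtual chain, and by \cref{lem:access} some arrival word of positive probability drives $Q$ back to $0$. Because the $Q$-component is autonomous and coincides pathwise with the virtual chain, the same word drives the full chain to a time $t_1$ with $Q(t_1)=0$. \Cref{cor:regen} is pathwise from the empty start, so it then forces $S(t_1)=\emptystate$. Hence the reachable set of $\emptystate$ is closed and every state in it communicates with $\emptystate$, so it is a single irreducible class.

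For positive recurrence, \cref{cor:regen} gives $\tau_\emptystate=\tau_0^+$ almost surely under $\PP_{\emptystate}$. Therefore $\E_{\emptystate}[\tau_\emptystate]=m_0<\infty$ by \cref{prop:virtual-pr}. An irreducible chain with one positive recurrent state is positive recurrent. The continuous-time chain is the rate-$\Lambda$ uniformization, so its mean return time is $m_0/\Lambda$ and it is positive recurrent on the same class.

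For (b), I cut the trajectory at the successive visits to $\emptystate$. By the strong Markov property the cycles are i.i.d., with mean length $m_0$. In each cycle the virtual activation count $\sum s_k$ is bounded by $2\times$ the cycle length, so it is integrable. The renewal--reward theorem then gives $M_k(t)/t\to \E_0[\sum_{u<\tau_0^+}s_k(u)]/m_0$ almost surely. By Kac's formula (the cycle representation of $\pi$), this limit equals $\E_\pi[s_k]=\bar\mu_k$.

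The step I expect to need care is $C_k(t)/t$, since completions lag virtual activations. At a regeneration time $B=\epsilon$ and $X=0$, so every virtual activation made within the cycle has physically completed by its end. Indeed, entries appended in the cycle are either completed or still in $B$, and $B$ is empty at the cycle end; no entry carries over between cycles. So $C_k$ and $M_k$ agree at every regeneration time. Between regenerations, $|M_k(t)-C_k(t)|\le|B(t)|$, and $C_k$ is monotone. Sandwiching between consecutive regeneration times, whose ratio tends to $1$ because cycle lengths are integrable, gives the same limit.

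Finally, $A\bar\mu=\bar\lambda$ follows from the balance equation $X(t)=N(t)-AC(t)$. At regeneration times $t_j\to\infty$ we have $X(t_j)=0$, so $N(t_j)=AC(t_j)$. Dividing by $t_j$ and using the strong law $N(t)/t\to\bar\lambda$ gives the identity. Multiplying by $\Lambda$ converts it to continuous-time rates, $A(\Lambda\bar\mu)=\lambda$.
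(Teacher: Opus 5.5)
Your proposal is correct and follows the paper's proof essentially step for step. The shared steps are: accessibility of $\emptystate$ from every reachable state via \cref{lem:access} and the pathwise identity of \cref{cor:regen}; the coincidence $\tau_\emptystate=\tau_0^+$, which gives mean return time $m_0$; renewal--reward plus Kac's formula for $M_k(t)/t$; and the monotone sandwich between consecutive regeneration times (where $C_k=M_k$) for $C_k(t)/t$. The only cosmetic difference is that you obtain $A\bar\mu=\bar\lambda$ from the physical balance $X=N-AC$ at regeneration times, whereas the paper uses the virtual balance $Q=N-AM$; the two coincide there since $X=0$, $Q=0$ and $C=M$.
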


\begin{proof}[Proof idea; details in Appendix~\ref{app:fullstate}]
By \cref{prop:virtual-pr} the virtual chain is positive recurrent on the class of $0$ with
finite mean return time $m_0$, and by \cref{cor:regen} the full chain returns to
$\emptystate$ exactly when $Q$ returns to $0$; hence $\emptystate$ has mean return time
$m_0 < \infty$ and $(S(t))$ is positive recurrent on the communicating class of
$\emptystate$, giving~(a). For~(b), a renewal--reward count over regeneration cycles
identifies the physical matching rates with the virtual activation rates $\bar\mu$, and the
virtual balance evaluated at regeneration times gives $A\bar\mu = \bar\lambda$.
\end{proof}

\subsection{Putting things together}\label{sec:putting-things-together}

\begin{proof}[Proof of \cref{thm:main}]
(ii) $\Leftrightarrow$ (iii) is \cref{lem:geometry}. For (iv): (iii) $\Rightarrow$ (iv)
is trivial (take $\supp(\mu) = \Ed$); and (iv) $\Rightarrow$ (ii) by applying
\cref{lem:geometry}, (b) $\Rightarrow$ (c), to the submatrix $A_{\supp(\mu)}$ (its columns
are nonzero, and $\mu$ restricted to its support is strictly positive), since any $y$ with
$\langle y, A_k\rangle \geq 0$ for all $k \in \Ed$ in particular satisfies it for
$k \in \supp(\mu)$, so $\langle y,\lambda\rangle > 0$ follows. (i) $\Rightarrow$ (ii) is
\cref{prop:necessity}. (iii) $\Rightarrow$ (i): \vqml{} is a policy in the admissible
class (\cref{rem:admissible}), and
\cref{lem:bridge,lem:foster,lem:access,lem:transfer} together with
\cref{prop:virtual-pr} show it stabilizes $(G,\lambda)$. Since \vqml{} is defined
from $G$ alone (no knowledge of $\lambda$, no parameter) the same policy
stabilizes $(G,\lambda)$ for every $\lambda$ satisfying (iii); by the
equivalence, \vqml{} stabilizes every stabilizable problem: it is maximally
stable.
\end{proof}

\newpage

\appendix

\section{The steering construction: proof of the accessibility lemma}\label{app:access}
\begin{proof}[Proof of \cref{lem:access}]
Recall the transition mechanism: from state $q$ the chain moves to
$q - A s(q) + e_i$ with probability $\bar\lambda_i > 0$, $i \in \V$, the
activation $s(q)$ being determined by $q$ alone, before the arrival. We
exhibit, for each starting state $q$, an explicit arrival word of length at
most $\|q\|_1 + 2 a_{\max} V(q)$ after which the chain sits at $0$; since
each prescribed arrival class occurs with probability at least
$\bar\lambda_{\min}$, independently of the past, the probability bound
follows. Write $\eta(q) = \sum_i \max(-q_i, 0)$ for the negative mass, so that
$\|q\|_1 = V(q) + \eta(q)$.

\emph{Step 0: three elementary facts.}
\begin{itemize}
\item[(F1)] If $q \leq 0$ componentwise, then $q$ is idle: each
$\langle q, A_k\rangle = \sum_i q_i A_{i,k}$ is a sum of nonpositive terms.
\item[(F2)] If $q \geq 0$ and $q \neq 0$, then $q$ is active: pick $i$ with
$q_i \geq 1$ and, the $i$-th row of $A$ being nonzero, a column $k$ with
$A_{i,k} \geq 1$; all terms of $\langle q, A_k\rangle$ are nonnegative and
the $i$-th is $\geq 1$.
\item[(F3)] If $q$ is active, then \emph{every} maximizer $s$ of
\cref{eq:vqml} satisfies $\langle q, A s\rangle = 2\Psi(q) > 0$ and
$\|s\|_1 = 2$; in particular $\|A s\|_1 \geq 2$. Indeed, doubling a best
column, $s = 2 e_{k^\ast}$ with $k^\ast \in \arg\max_k \langle q,
A_k\rangle$, is feasible and scores $2\Psi(q)$; any $s$ with $\|s\|_1 \leq
1$ scores at most $\Psi(q) < 2\Psi(q)$; and any $s$ with $\|s\|_1 = 2$, say
$A s = A_k + A_l$ (possibly $k = l$), scores $\langle q, A_k\rangle +
\langle q, A_l\rangle \leq 2\Psi(q)$. Finally $\|A_k + A_l\|_1 = \|A_k\|_1
+ \|A_l\|_1 \geq 2$, the columns being nonnegative and nonzero.
\end{itemize}

\emph{Step 1: from an active state, one steered step decreases the positive
mass.} Let $q$ be active, $s = s(q)$ and $r = q - A s$. We claim that there
is an arrival class $i$ with
\[
V(r + e_i) \leq V(q) - 1,
\qquad\text{while for every class } j,\quad
\eta(r + e_j) \leq \eta(q) + 2 a_{\max} .
\]
The negative-mass bound holds for any arrival: coordinatewise, $q$ drops by
at most $(A s)_i \geq 0$, so $\eta(r) \leq \eta(q) + \|A s\|_1 \leq \eta(q) + 2
a_{\max}$, and adding $e_j$ cannot increase $\eta$. For the positive mass,
distinguish two cases.
\begin{itemize}
\item \emph{Overshoot case: $r_j \leq -1$ for some $j$.} Feed $i = j$, so
that $V(r + e_j) = V(r)$. Now
$0 < \langle q, A s\rangle = \sum_i q_i (A s)_i
\leq \sum_{i :\, q_i \geq 1} q_i (A s)_i$,
so some class $i^\ast$ has $q_{i^\ast} \geq 1$ and $(A s)_{i^\ast} \geq 1$;
then $\max(r_{i^\ast}, 0) \leq q_{i^\ast} - 1$, while $\max(r_j, 0) \leq
\max(q_j, 0)$ for every $j$. Summing, $V(r) \leq V(q) - 1$.
\item \emph{No-overshoot case: $r \geq 0$.} Then $q = r + A s \geq 0$, so
$V(q) = \|q\|_1$, and for \emph{any} arrival $i$, using (F3),
$V(r + e_i) = \|r\|_1 + 1 = \|q\|_1 - \|A s\|_1 + 1 \leq V(q) - 1$.
\end{itemize}

\emph{Step 2: from an idle state $\neq 0$, one steered step consumes one
unit of negative mass at constant positive mass.} Let $q \neq 0$ be idle.
By (F2), $q$ has a coordinate $q_j \leq -1$ (otherwise $q \geq 0$, $q \neq
0$ would be active). Feed $i = j$: by the idle clause the transition is $q
\mapsto q + e_j$, so $V$ is unchanged and $\eta$ decreases by exactly one.
Note that if $q \leq 0$, then by (F1) the steered chain stays idle and in
the nonpositive orthant all the way up to $0$.

\emph{Step 3: assembly.} Starting from $q$, repeat: stop if the current
state is $0$; otherwise apply the steered step of Step~1 (active state) or
Step~2 (idle state). The procedure is well defined, and it terminates: $V$
never increases and each active step decreases it by at least one, so at
most $V(q)$ active steps occur in total; each idle step decreases $\eta$ by
one while each active step increases it by at most $2 a_{\max}$, so, $\eta$
being nonnegative throughout, at most $\eta(q) + 2 a_{\max} V(q)$ idle steps
occur; an infinite run would require infinitely many idle steps. Hence the
procedure stops (by construction, only at $0$) after at most
\[
V(q) + \big( \eta(q) + 2 a_{\max} V(q) \big)
= \|q\|_1 + 2 a_{\max} V(q)
\]
steps, as announced.
\end{proof}

\section{Full-state bounds and transfer}\label{app:fullstate}
\begin{proof}[Proof of \cref{lem:pathwise}]
(a)$+$(b) by induction over epochs. At $t = 0$ all quantities vanish. Assume
(a)--(b) at $t$ and follow the pair $(U, D)$ through epoch $t$. Step (E3)
adds, for each appended entry $(k, 0)$, a full residual $A_k$: $D$ increases
by $A\,s(t)$ and $U$ is unchanged. Step (E4): $U$ increases by $a(t)$, $D$
unchanged. Each elementary assignment in (E5) moves one unassigned item of
some class $j$ onto some entry: $U_j$ decreases by one and the entry's
class-$j$ residual decreases by one, so $D_j$ decreases by one and $U - D$ is
unchanged. Step (E6) removes entries with zero residual only, changing
neither $D$ nor $U$ (the removed items were assigned, not unassigned).
Summing, and using the induction hypothesis and (E2),
\[
U(t+1) - D(t+1) \;=\; \bigl( U(t) - D(t) \bigr) + a(t) - A\,s(t)
\;=\; Q(t) + a(t) - A\,s(t) \;=\; Q(t+1),
\]
which is (a) at $t+1$. For (b), fix a class $j$. Along the scan (E5) the pool
$U_j$ is non-increasing. Suppose $U_j(t+1) > 0$, i.e.\ the pool is still
positive at the end of the scan, and suppose some entry of the post-scan list
retained a positive class-$j$ residual. When that entry was processed, the
amount assigned to it, $\min(\text{pool}, \text{residual})$, was strictly less
than its residual, so it equaled the pool, which therefore dropped to $0$
there and, being non-increasing, stayed at $0$, contradicting
$U_j(t+1) > 0$. Hence every entry of the post-scan list, and in particular
every entry of $B(t+1)$ (a sublist), has zero class-$j$ residual:
$D_j(t+1) = 0$. This is complementarity at $t+1$; since $U, D \geq 0$ and
$Q = U - D$, uniqueness of the Jordan decomposition gives $U = Q^+$,
$D = Q^-$.

(c) Every entry of $B(t)$ is incomplete, so $\|r(b_j)\|_1 \geq 1$; summing
residuals over the backlog and using (b),
$|B(t)| \leq \sum_j \|r(b_j)\|_1 = \sum_i D_i(t) = \sum_i Q_i^-(t)$.

(d) $\sum_i X_i = \sum_i U_i + \sum_i W_i$, where $\sum_i U_i = \sum_i Q_i^+$
by (b) and, per entry, $\|c_j\|_1 = \|A_{k_j}\|_1 - \|r(b_j)\|_1 \leq
a_{\max} - 1$; hence $\sum_i W_i \leq (a_{\max} - 1)\,|B(t)| \leq
(a_{\max} - 1) \sum_i Q_i^-(t)$ by (c).
\end{proof}

\begin{proof}[Proof of \cref{lem:transfer}]
\emph{Virtual chain.} By \cref{lem:access}, the set $\mathcal{R}_0$ of
virtual states reachable from $0$ is the communicating class of $0$, and by
\cref{prop:virtual-pr} the virtual chain restricted to it is irreducible and
positive recurrent, with $m_0 = \E_0[\tau_0^+] < \infty$.

(a) The $Q$-component of $(S(t))$ is autonomous and starts at $0$, so
\cref{cor:regen} gives $\tau_\emptystate = \tau_0^+$ pathwise for the
$S$-chain started at $\emptystate$, whence $\E_\emptystate[\tau_\emptystate]
= m_0 < \infty$: $\emptystate$ is a positive recurrent state of $(S(t))$.
Irreducibility on the reachable set: let $\sigma$ be reachable from
$\emptystate$, say via a finite arrival word $w$ of positive probability, and
let $q$ be the $Q$-component of $\sigma$; then $q \in \mathcal{R}_0$, and by
\cref{lem:access} some finite arrival word $w'$ of positive probability
drives the virtual chain from $q$ to $0$ (given the arrival word, the
trajectory is deterministic, and every fixed word has positive probability
since $\bar\lambda_i > 0$). Running $w$ followed by $w'$ from $\emptystate$
produces a trajectory from the empty start that passes through $\sigma$ and
whose virtual state at time $|w| + |w'|$ is $0$; by \cref{cor:regen}, its
full state at that time is $\emptystate$. Hence $\emptystate$ is accessible
from every reachable $\sigma$: the reachable set of $S$ is the communicating
class of $\emptystate$, the restricted chain is irreducible, and since
positive recurrence is a class property, the whole class is positive
recurrent, which is the stability of the embedded chain in the sense of \cref{sec:model}. For
the continuous-time chain: state changes occur only at the arrival epochs,
which form a Poisson process of constant rate $\Lambda$ whose class marks are
independent of the epoch times, so the continuous-time process is the
rate-$\Lambda$ uniformization of $(S(t))$. In fact the kernel has no
self-loops: $Q$ changes at every epoch, since idle steps add $e_i \neq 0$ and
active steps have $\|A s\|_1 \geq 2$; constant-rate
uniformization would tolerate self-loops anyway. A continuous-time return to
$\emptystate$ lasts $\sum_{j=1}^{\tau_\emptystate} E_j$ with $(E_j)$ i.i.d.\
exponential($\Lambda$) independent of the marks; by Wald's identity its mean
is $m_0 / \Lambda < \infty$, so the continuous-time chain is positive recurrent
on the same class, with stationary law equal to that of the embedded chain
(constant uniformization rate).

(b) Let $T_0 = 0 < T_1 < T_2 < \cdots$ be the successive visits of $S$ to
$\emptystate$, or equally the visits of $Q$ to $0$ by \cref{cor:regen}. By
the strong Markov property the cycles $\bigl( S(T_j), \dots, S(T_{j+1} - 1)
\bigr)$, $j \geq 0$, are i.i.d.\ with $\E[T_{j+1} - T_j] = m_0 < \infty$; hence
$T_j / j \to m_0$ a.s.\ and $T_{j+1}/T_j \to 1$ a.s. Every virtually activated
matching enters the backlog at (E3) and can leave it only via its completion
at (E6), at most once; therefore $M_k(t) - C_k(t)$ equals the number of type-$k$
entries of $B(t)$, so $0 \leq C_k(t) \leq M_k(t)$, with \emph{equality at
regeneration times}: $C_k(T_j) = M_k(T_j)$, the backlog being empty there.
The per-cycle rewards $R^{(k)}_j := M_k(T_{j+1}) - M_k(T_j) =
\sum_{u = T_j}^{T_{j+1} - 1} s_k(Q(u))$ are i.i.d.\ with $0 \leq R^{(k)}_j
\leq 2 (T_{j+1} - T_j)$, hence $\E[R^{(k)}] \leq 2m_0 < \infty$, and the
renewal--reward theorem yields $M_k(t)/t \to \E[R^{(k)}] / m_0 =: \bar\mu_k$
a.s. For $j \geq 1$ and $T_j \leq t < T_{j+1}$, monotonicity gives
\[
\frac{M_k(T_j)}{T_{j+1}} \;\leq\; \frac{C_k(t)}{t} \;\leq\;
\frac{M_k(T_{j+1})}{T_j},
\]
and both bounds converge a.s.\ to $\bar\mu_k$ (using $T_{j+1}/T_j \to 1$), so
$C_k(t)/t \to \bar\mu_k$ a.s.: the physical matching rates coincide with the
virtual activation rates. The identification $\bar\mu_k = \E_\pi[s_k(\cdot)]$
is the cycle representation (Kac's formula) of the stationary law: $\pi(q) =
m_0^{-1} \E_0 \bigl[ \sum_{u < \tau_0^+} \mathbf{1}\{Q(u) = q\} \bigr]$, so
$\E_\pi[s_k] = m_0^{-1} \E_0 \bigl[ \sum_{u < \tau_0^+} s_k(Q(u)) \bigr] =
\E[R^{(k)}]/m_0$ (Tonelli; all terms nonnegative). Finally, the virtual balance
$Q(t) = N(t) - A\,M(t)$ (from (E2) and $Q(0) = 0$, with $N_i(t)$ the number of
class-$i$ arrivals before $t$), evaluated at $t = T_j$ for $j \geq 1$, gives
\[
0 \;=\; \frac{Q(T_j)}{T_j} \;=\; \frac{N(T_j)}{T_j} - A\, \frac{M(T_j)}{T_j}
\;\xrightarrow[j \to \infty]{\text{a.s.}}\; \bar\lambda - A \bar\mu ,
\]
by the strong law for the i.i.d.\ arrival marks, whence $A \bar\mu =
\bar\lambda$. Multiplying by $\Lambda$ and invoking the Poisson strong law
converts per-epoch rates into per-time rates.
\end{proof}

\begin{remark}[Provenance, timing, and constants: relation to NS19]
\label{rem:provenance}
Steps (E3)--(E6) are Nazari--Stolyar's completion mechanism
\cite[Sec.~3.3--3.4]{NS19}, extended verbatim to budget-two decision
multisets, and \cref{lem:pathwise}(c)--(d) are the
full-state analogues of the pathwise bounds of~\cite[Prop.~3]{NS19}
(arXiv-v4 numbering). We re-prove rather than import them, for three reasons.
\emph{(1) Level:} NS19 state their bounds for the count triple $(Q, \hat Q,
\hat Q_0)$, a projection of $S(t)$ that is not obviously Markov (future
completions depend on the composition of the FIFO list, not merely on its
length), whereas the stability definition refers to the full chain; the
bounds are needed, and are proven here, at that level. \emph{(2) Budget:}
NS19's written proof activates one matching per slot; encoding our budget-two
multisets $\|s\|_1 \leq 2$ as single composite matchings would roughly double
the constant (to $2 a_{\max} - 1$), while the per-entry accounting above keeps
$a_{\max} - 1$. \emph{(3) Timing:} \cref{eq:vqml}
decides from $Q(t)$ \emph{before} the arrival $a(t)$ lands
(decide-then-arrive); \cref{lem:pathwise} treats $(s(t))$ as an arbitrary
given sequence, so the pathwise bounds are insensitive to the within-slot
ordering and no alignment with NS19's convention is required, while what
\cref{cor:regen} and \cref{lem:transfer} rest on is the \emph{pathwise
coincidence} of the $Q$-component with the chain of \cref{eq:vqml} analyzed in
\cref{lem:foster,lem:access,prop:virtual-pr}, which is what (E1)--(E2) under the
decide-then-arrive convention deliver (an arrive-then-decide variant would also
have an autonomous $Q$-component, but a \emph{different} chain, to which those
lemmas would not apply verbatim). The FIFO order and the fixed appending/scanning orders in
(E3)/(E5) matter only to make the transition kernel deterministic: any
\emph{eager} assignment rule (one leaving no unassigned item next to an entry
that needs it) preserves \cref{lem:pathwise}; FIFO is kept for definiteness
and to match NS19.
\end{remark}

\section{An alternative renewal route}\label{app:altroute}
\begin{corollary}[Confinement]\label{cor:confine}
For $C \in \N$ let
\[
F_C := \Bigl\{ (x, q, \beta) \in \N^n \times \Z^n \times \Sigma_A^{\,\ast}
\;:\; \|q\|_1 \leq C,\ \ |\beta| \leq C,\ \ \textstyle\sum_i x_i \leq
a_{\max} C \Bigr\},
\]
a finite set, of cardinality at most $(a_{\max} C + 1)^n \, (2C+1)^n \,
(|\Sigma_A| + 1)^C$. Along every trajectory from the empty start,
\[
S(t) \in F_C \iff \|Q(t)\|_1 \leq C :
\]
indeed $\|Q(t)\|_1 \leq C$ gives $\sum_i Q_i^-(t) \leq C$ and $\sum_i Q_i^+(t)
\leq C$, so $|B(t)| \leq C$ and $\sum_i X_i(t) \leq C + (a_{\max} - 1) C =
a_{\max} C$ by \cref{lem:pathwise}(c)--(d); the converse is the projection on
the $Q$-coordinate. Consequently, for the chain started at the empty state
(or at any state reachable from it), the hitting-time bounds of
\cref{lem:foster}(b) for $Q$ are, pathwise, hitting-time bounds of the
\emph{full} chain $(S(t))$ to the finite set $F_M$: the virtual drift
analysis already controls returns of the full chain to a finite set, with no
further stochastic argument. (This corollary is not needed for
\cref{thm:main}, whose proof goes through \cref{cor:regen}; it is recorded
because it makes the $F$-return structure of the full chain explicit and
supports the alternative renewal route of \cref{rem:access-scope}(c).)
\end{corollary}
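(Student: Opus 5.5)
The statement to prove is \cref{cor:confine}, and it has three parts: finiteness of $F_C$ with the stated cardinality bound, the pathwise equivalence $S(t)\in F_C \iff \|Q(t)\|_1\le C$, and the transfer of the hitting-time bounds.

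\emph{Cardinality of $F_C$.} I count each coordinate separately.
\begin{itemize}
\item Each $x_i$ lies in $\{0,\dots,a_{\max}C\}$, which gives at most $(a_{\max}C+1)^n$ choices for $x$.
\item Each $q_i$ satisfies $|q_i|\le C$, which gives at most $(2C+1)^n$ choices for $q$.
\item The backlog $\beta$ is a word of length at most $C$ over $\Sigma_A$. The number of such words is $\sum_{\ell\le C}|\Sigma_A|^\ell$. This is at most $(|\Sigma_A|+1)^C$, by expanding $(|\Sigma_A|+1)^C$ binomially, or by padding each word to length exactly $C$ with a blank symbol.
\end{itemize}
Multiplying the three counts gives the stated bound.

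\emph{The equivalence.} I fix a trajectory from the empty start and an epoch $t$. The direction ``$\Leftarrow$'' is the substantive one, and I derive it from \cref{lem:pathwise}:
\begin{itemize}
\item Write $\|q\|_1=\sum_i q_i^+ + \sum_i q_i^-$. Both sums are nonnegative, so each is at most $\|Q(t)\|_1\le C$.
\item \Cref{lem:pathwise}(c) then gives $|B(t)|\le\sum_i Q_i^-(t)\le C$.
\item \Cref{lem:pathwise}(d) gives $\sum_i X_i(t)\le C+(a_{\max}-1)C=a_{\max}C$. Here I use $a_{\max}\ge1$, which holds because columns of $A$ are nonzero, so the coefficient $a_{\max}-1$ is nonnegative.
\end{itemize}
The direction ``$\Rightarrow$'' is immediate from the definition of $F_C$.

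\emph{Transfer of the hitting-time bounds.} The $Q$-component of $S$ is autonomous and coincides pathwise with the virtual chain. Moreover, $F=\{\|q\|_1\le M\}$, where I should take $C=\lceil M\rceil$ or read $F_M$ with real $M$, as the definition allows. By the equivalence, the hitting times of $F_M$ by $S$ and of $F$ by $Q$ are equal along every trajectory. Hence \cref{lem:foster}(b) applies directly to $S$. This holds for any starting state reachable from $\emptystate$, because \cref{lem:pathwise} holds along any trajectory from the empty start, and so for every state such a trajectory visits.

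\emph{Main obstacle.} The only delicate point is that \cref{lem:pathwise} is stated along trajectories from the empty start, so the equivalence must be restricted to reachable states. This is consistent with the statement's phrase ``along every trajectory from the empty start'', and I will note it explicitly.
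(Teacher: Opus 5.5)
Your proposal is correct and follows the paper's own argument. You split $\|Q(t)\|_1$ into positive and negative mass, apply \cref{lem:pathwise}(c)--(d) for the nontrivial direction, take the converse by projection, and transfer the hitting-time bounds through the autonomous $Q$-component; your explicit word count and the remark $a_{\max}\ge 1$ supply details the paper leaves implicit. One small slip: since $\|q\|_1$ is an integer, the integer choice that reproduces $F$ exactly is $C=\lfloor M\rfloor$, not $\lceil M\rceil$ (the latter can give a strictly larger set, so hitting times would only be bounded rather than equal), but your alternative of reading $F_M$ with real $M$, which is what the paper tacitly does, is exactly right.
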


\end{document}